\title{Gossiping in Message-Passing Systems\footnote{Partly supported by ANR FREDDA and UMI RELAX.}}
\author{Benedikt Bollig}
\author{Marie Fortin}
\author{Paul Gastin}
\affil{LSV, CNRS \& ENS Paris-Saclay, Universit{\'e} Paris-Saclay\\
\texttt{\{bollig,fortin,gastin\}@lsv.fr}}
\authorrunning{B. Bollig, M. Fortin, and P. Gastin}
\renewcommand{\paragraph}{\@startsection{paragraph}{6}{\z@}{2ex}{-0.7em}{\normalsize\bf}}
\definecolor{mygreen}{RGB}{0.0,180,0.0}
\theoremstyle{plain}
\newtheorem{proposition}[theorem]{Proposition}
\tikzstyle{dot} = [circle, fill, inner sep=0, minimum size = 2pt]
\tikzstyle{bdot} = [circle, fill, inner sep=0, minimum size = 4pt]
\definecolor{mygreen}{RGB}{0.0,180,0.0}
\newcommand\A{\ensuremath{\mathcal{A}}\xspace}
\newcommand\prel{\rightarrow}
\newcommand\mrel{\lhd}
\newcommand\init{\iota}
\newcommand\CFM{\text{CFM}\xspace}
\newcommand\CFMs{\text{CFMs}\xspace}
\newcommand\Acc{\mathit{Acc}}
\newcommand\rPastp[2]{{\downarrow_{#1}} (#2)}
\newcommand\p{\pi}
\newcommand\source{\mathit{source}}
\newcommand\target{\mathit{target}}
\newcommand\tmsg{\mathit{msg}}
\newcommand\tlabel{\mathit{label}}
\newcommand\receiver{\mathit{receiver}}
\newcommand\sender{\mathit{sender}}
\newcommand\MSCs[2]{\mathbb{MSC}(#1,#2)}
\newcommand\Paths{\Pi}
\newcommand\last[2]{\mathsf{pred}_{#1}(#2)}
\newcommand\first[2]{\mathsf{succ}_{#1}(#2)}
\newcommand\lastf[1]{\mathsf{pred}_{#1}}
\newcommand\firstf[1]{\mathsf{succ}_{#1}}
\newcommand\Procs{P}
\newcommand\Msg{\mathit{Msg}}
\newcommand\msg{m}
\newcommand\lastp[2]{\mathsf{latest}_{#1}(#2)}
\newcommand\fa[2]{\alpha_{#1,#2}}
\newcommand\Ord{\mathcal{R}}
\newcommand\Afa[2]{\A_{\alpha,#1,#2}}
\newcommand\Ale{\A_\ple}
\newcommand\botp[1]{\bot}
\newcommand\topp[1]{\top}
\newcommand\X[1]{\mathop{\mathsf{X}_{#1}}}
\newcommand\Y[1]{\mathop{\mathsf{Y}_{#1}}}
\newcommand\Co{\mathop{\mathsf{Co}}}
\newcommand\Up[1]{\mathop{\mathcal{U}_{#1}}}
\newcommand\Op[1]{\mathop{\mathcal{O}_{#1}}}
\newcommand{\Until}{\mathbin{\tilde{\mathsf{U}}}}
\newcommand{\Since}{\mathbin{\tilde{\mathsf{S}}}}
\newcommand\TL{\ensuremath{\textup{LTL}(\mathsf{Co},\Until,\Since)}\xspace}
\newcommand{\moves}{\Gamma}
\newcommand{\pmove}{{\prel}}
\newcommand{\pamove}{\mathrel{\text{$\xrightarrow{\ast}$}}}
\newcommand{\ppmove}{\mathrel{\text{$\xrightarrow{+}$}}}
\newcommand{\mmove}[2]{{\mrel_{#1,#2}}}
\newcommand{\amove}[1]{#1}
\newcommand{\sem}[2]{\llbracket {#2} \rrbracket_{#1}}
\newcommand{\Pto}[4]{(#2,#4) \in \sem {#1} {#3}}
\newcommand{\Comp}[1]{\mathit{Comp}(#1)}
\newcommand{\loc}[1]{\langle#1\rangle}
\newcommand{\send}[3]{\langle#1,!_{#3}#2\rangle}
\newcommand{\rec}[3]{\langle#1,?_{#3}#2\rangle}
\newcommand{\C}{\mathcal{C}}
\newcommand{\df}{:=}
\newcommand{\ple}{\preceq}
\newcommand{\pless}{\prec}
\newcommand{\act}{\gamma}
\newcommand{\ploc}{\mathit{loc}}
\newcommand{\lab}{\lambda}
\newcommand{\nbullet}{\resizebox{!}{1.2ex}{$\diamond$}}
\newcommand{\abullet}{
\begin{tikzpicture}
\filldraw[fill=orange!60,draw=black] circle (2.8pt);
\end{tikzpicture}
}
\newcommand{\bbullet}{
\begin{tikzpicture}
\filldraw[fill=blue!50,draw=black] rectangle (5pt,5pt);
\end{tikzpicture}
}
\newcommand{\Agossip}{\mathcal{A}_{\mathsf{gossip}}}
\newcommand{\pexpr}[1]{\p_{#1}}
\newcommand{\SimplePaths}{\Pi^{\mathsf{gossip}}}
\newcommand{\colone}{
\begin{tikzpicture}
\filldraw[fill=white,draw=black] circle (2.8pt);
\end{tikzpicture}
}
\newcommand{\coltwo}{
\begin{tikzpicture}
\filldraw[fill=gray!90,draw=black] circle (2.8pt);
\end{tikzpicture}
}
\newcommand{\zcolone}{
\begin{tikzpicture}
\filldraw[fill=white,draw=black] rectangle (5pt,5pt);
\end{tikzpicture}
}
\newcommand{\zcoltwo}{
\begin{tikzpicture}
\filldraw[fill=gray!90,draw=black] rectangle (5pt,5pt);
\end{tikzpicture}
}
\newcommand{\xia}[1]{\xi_1(#1)}
\newcommand{\xib}[1]{\xi_2(#1)}
\newcommand{\tht}[2]{\theta(#1)(#2)}
\begin{document}

\maketitle

\pagestyle{fancy}
\fancyhead{}
\renewcommand{\headrulewidth}{0pt}
\fancyfoot[C]{\vspace{2ex}\thepage}


\begin{abstract}
We study the gossip problem in a message-passing environment:
When a process receives a message, it has to decide whether
the sender has more recent information on other processes than itself.
This problem is at the heart of many distributed algorithms,
and it is tightly related to questions from formal methods
concerning the expressive power of distributed automata.
We provide a non-deterministic gossip protocol for message-passing
systems with unbounded FIFO channels,
using only finitely many local states and a finite message alphabet.
We show that this is optimal in the sense that there is no deterministic
counterpart. As an application, the gossip protocol allows us to show
that message-passing systems capture well-known extensions of
linear-time temporal logics to a concurrent setting.
\end{abstract}

\section{Introduction}

\emph{Causality} is a fundamental concept in distributed computing \cite{Attiya:2004,Raynal:2013,Lynch:1996,Tel:2001}. In his influential paper \cite{Lamport78}, Lamport postulated that events in an execution of a distributed system are partially ordered by what is commonly referred to as the happens-before or causal-precedence relation. Two events that are related in the partial order can be considered \emph{causally dependent}. Tightly related is the notion of a snapshot, or global system state, which corresponds to a ``lateral cut'' through the partial order. Snapshot computations are at the heart of many distributed algorithms such as deadlock and termination detection, checkpointing, or monitoring. However, they are intricate due to the absence of a shared memory and unpredictable delay of message delivery, and they continue to constitute a fundamental research area \cite{Raynal:2013}.

\smallskip

A variety of techniques exist to obtain a consistent view of the global system state, ranging from time-stamping to ``gossiping''.
The aim of the latter is to keep track of the latest information that a process has about all other processes.
Interestingly, gossip protocols and related techniques such as asynchronous mappings have also been exploited in formal methods, in particular when it comes to establishing the expressive power of an automata model \cite{MukundS97,CMZ93,MukundKS03,DolevS97}. In particular, gossip protocols are the key to simulating high-level specifications, which include message sequence graphs and monadic second-order logic \cite{HenriksenJournal,GKM06,Kuske01,Zielonka87,tho90traces}. All these techniques and algorithms, however, require that communication be synchronous or accomplished through FIFO channels with \emph{limited} capacity.

Now, it is a standard assumption in distributed computing that channels are a priori unbounded (cf.\ \cite{Raynal:2013,Tel:2001}). In this paper, we consider the gossip problem in a message-passing environment where a finite number of processes communicate through \emph{unbounded} point-to-point FIFO channels. The problem can be stated as follows:
\begin{center}
  \parbox{0.6\textwidth}{ \it Whenever process $q$ receives a message from process $r$, $q$ has to decide, for all processes $p$, whether it has more recent information on $p$ than $r$.  }
\end{center}
Equivalently, $q$ has to output the most recent local state of $p$ that is still in its causal past.
The gossip protocol is superimposed on an existing system.
It is \emph{passive} (also reactive or observational) in the sense that it can add information to messages that \emph{are sent anyway}.
It is neither allowed to initiate extra communications nor to suspend the system activity.
This is fundamentally different from classical snapshot algorithms such as the one by Chandy and Lamport \cite{ChandyL85}, where the system is allowed to intersperse new send and receive events. In fact, like \cite{MukundS97,CMZ93,MukundKS03}, we will impose additional requirements: Both the set of messages and the set of local states must be finite. Besides being a natural assumption, this will allow us to exploit the gossip protocol to compare the expressive power of temporal logics and message-passing systems.

However, we will show that, unfortunately, there is no \emph{deterministic} gossip protocol.
This impossibility result is in contrast to the deterministic protocols for synchronous communication or
message-passing environments with bounded channels \cite{MukundS97,CMZ93,MukundKS03,DolevS97}.

On the positive side, and as our main contribution, we provide a non-deterministic gossip protocol: For every possible communication scenario,
\begin{itemize}
\item there is an accepting run that produces the correct output (i.e., the correct latest information);
\item there may be system runs that do not produce the correct output, but 
these runs will be rejected by our gossip protocol.
\end{itemize}

The (non-deterministic) gossip protocol is an important step towards a better understanding of the expressive power of communicating finite-state machines (CFMs), which are a classical model of message-passing systems \cite{Brand1983}.
From a logical point of view, maintaining the latest information in a distributed system is a first-order property that requires \emph{three} variables: An event $e$ on process $p$ is the most recent one in the causal past of an event $f$ if all other events $g$ on $p$ that are in the causal past of $f$ are also in the past of~$e$.
Unfortunately, it is not known whether first-order formulas can always be translated into communicating finite-state machines.
However, using our gossip protocol, we show that we can deal with all formulas from classical temporal logics that have been studied for concurrent systems in the realm of partial orders \cite{Thiagarajan94,GK-fi07,DiekertG06}.
Since gossiping has been employed for implementing other high-level specifications (cf.\ \cite{Mukund12a}), we believe that our procedure can be of interest in other contexts, too, and be used to simplify or even generalize existing results.

\smallskip

To summarize, the motivation of this work comes from distributed algorithms and formal methods.
On the one hand, we tackle an important problem from distributed computing.
On the other hand, our results shed some light on the expressive power of message-passing systems.
In fact, previous logical studies of \CFMs with unbounded FIFO channels
in terms of existential MSO logic (without happens-before relation and, respectively, restricted to two first-order variables) and propositional dynamic logic \cite{BolligJournal,BKM-lmcs10,BFG-stacs18} do not allow us to solve the gossip problem or to show that CFMs capture abovementioned linear-time temporal logics.

\subparagraph{Outline.}

The paper is structured as follows:
In Section~\ref{sec:prel}, we define communicating finite-state machines (CFMs), a fundamental model of message-passing systems. The gossip problem is introduced in Section~\ref{sec:gossip}. Our (non-deterministic) solution to the gossip problem is distributed over two parts, Sections~\ref{sec:paths} and \ref{sec:preorder-cfm}. In fact, it is obtained as an instance of a more general approach, in which we are able to compare the latest information transmitted along paths described by path expressions. This general solution finally allows us to translate formulas from linear-time temporal logic into CFMs (Section~\ref{sec:logic}). We conclude in Section~\ref{sec:conclusion}.


\section{Preliminaries}\label{sec:prel}

\subparagraph{Communicating Finite-State Machines.}

We consider a distributed system with a fixed finite set of processes $\Procs$.
Processes are connected in a communication network that contains a FIFO channel
from every process $p$ to any other process $q$ such that $p \neq q$.
We also assume a finite set $\Sigma$ of \emph{labels}, which provide information
about events in a system execution such as ``enter critical region'' or ``output some value''.

In a communicating finite-state machine, each process $p \in \Procs$ can
perform local actions, or send/receive messages from a finite set of messages $\Msg$.
Process $p$ is represented as a finite transition system
$\A_p = (S_p,\init_p,\Delta_p)$ where $S_p$ is the finite set of (local) states,
$\init_p \in S_p$ is the initial state, and $\Delta_p$ is the transition relation.

A transition in $\Delta_p$ is of the form $t=(s,\act,s')$ where $s,s' \in S_p$ are the source
state and the target state, referred to as $\source(t)$ and $\target(t)$,
respectively.  Moreover, $\act$ determines the effect of $t$.  First, $\act$ may
be of the form $\loc{a}$ with $a \in \Sigma$.
In that case, $t$ performs a local computation that does not involve any communication primitive. We let $\tlabel(t) = a$.
Second, $\act$ may be of the form $\send{a}{\msg}{q}$. Then, in addition to performing $a \in \Sigma$, process $p$ sends message $\msg \in \Msg$ to process $q \in \Procs \setminus \{p\}$.
More precisely, $\msg$ is placed in the FIFO channel from $p$ to $q$.
We let $\receiver(t) = q$, $\tmsg(t) = \msg$, and $\tlabel(t) = a$.
Finally, if $\act = \rec{a}{\msg}{q}$, then $p$ receives message $\msg$ from $q$,
and we let $\sender(t) = q$, $\tmsg(t) = \msg$, and $\tlabel(t) = a$.

In addition, our system is equipped with an acceptance condition.  In order for
an execution to be accepting, all channels have to be empty and the collection
of local states in which processes terminate must belong to a set $\Acc
\subseteq \prod_{p \in \Procs} S_p$.  We call the tuple $\C = ((\A_p)_{p \in
\Procs},\Msg,\Acc)$ a \emph{communicating finite-state machine (CFM)} over $\Procs$ and $\Sigma$.

\begin{example}\label{ex:cfm}
Consider the simple CFM depicted in Figure~\ref{fig:cfm}.
The set of processes is $\Procs = \{p,q,r\}$.
Moreover, we have $\Sigma = \{\bbullet,\abullet,\nbullet\}$ and $\Msg = \{\bbullet,\abullet\}$.
Process $p$ sends messages to $q$ and $r$. Each message can be either
$\bbullet$ or $\abullet$, and the message sent is made ``visible'' in terms of $\Sigma$.
Process $r$ simply forwards every message it receives to $q$.
In any case, the action is $\nbullet$, which means that we do not want to reason about the forwarding itself.
Finally, $q$ receives and ``outputs'' messages from $p$ and $r$ in any order. Note that, in this example, there are no local transitions,
i.e., every transition is either sending or receiving.
\end{example}

\usetikzlibrary{positioning,automata}

\begin{figure}[h]
\centering
\begin{tikzpicture}[shorten >=1pt,node distance=2cm,on grid, semithick,>=stealth]
\tikzstyle{every state}=[draw=black!50,very thick,fill=white, circle, minimum size=1pt]
  \node[state,initial,accepting, initial text=$p$, initial above]   (q_0) at (0,0)               {$s_0^p$};
  \path[->] (q_0) edge [loop right] node [right]
    {$\!\!\!\!\begin{array}{l}\send{\bbullet}{\bbullet}{q}\\%
                              \send{\bbullet}{\bbullet}{r}\\%
                              \send{\abullet}{\abullet}{q}\\%
                              \send{\abullet}{\abullet}{r}\\%
              \end{array}$} (q_0);

  \node[state,initial,accepting, initial text=$r$, initial above]   (q_0) at (5.8,0)               {$s_0^r$};
  \node[state]           (q_1) [left=of q_0] {$s_1^r$};
  \node[state]           (q_2) [right=of q_0] {$s_2^r$};
  \path[->] (q_0) edge [bend right=30] node [above] {$\rec{\nbullet}{\bbullet}{p}$} (q_1)
            (q_1) edge [bend right=30] node [below] {$\send{\nbullet}{\bbullet}{q}$} (q_0)
            (q_0) edge [bend left=30] node [above] {$\rec{\nbullet}{\abullet}{p}$} (q_2)
            (q_2) edge [bend left=30] node [below] {$\send{\nbullet}{\abullet}{q}$} (q_0);
              
  \node[state,initial,accepting, initial text=$q$, initial above]   (q_0) at (10,0)               {$s_0^q$};
  \path[->] (q_0) edge [loop right] node [right]
    {$\!\!\!\!\begin{array}{l}\rec{\bbullet}{\bbullet}{p}\\%
                              \rec{\bbullet}{\bbullet}{r}\\%
                              \rec{\abullet}{\abullet}{p}\\%
                              \rec{\abullet}{\abullet}{r}\\%
              \end{array}$} (q_0);
\end{tikzpicture}
\caption{A communicating finite-state machine\label{fig:cfm}}
\end{figure}
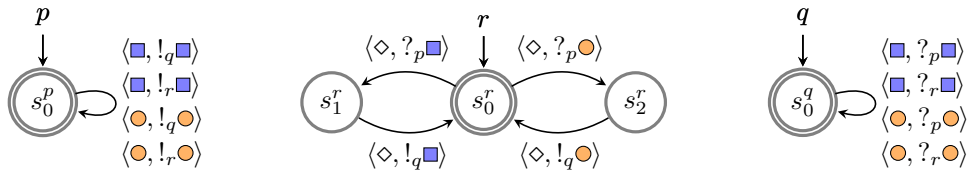

\subparagraph{Message Sequence Charts.}

An execution of $\C$ can be described by a diagram as depicted in Figure~\ref{fig:msc}.
Process $p$ performs eight transitions, alternately sending a message to $q$ and $r$.
Note that the execution does not keep track of states and messages (unless made ``visible'' by means of $\Sigma$).

Let us describe a structure like in Figure~\ref{fig:msc} formally.
We have a nonempty finite set $E$ of \emph{events} (in the example, $E=\{e_0,\ldots,e_7,g_0,\ldots,g_7,f_0,\ldots,f_7\}$).
With each event, we associate its process and an action from $\Sigma$, i.e., we have mappings $\ploc: E \to \Procs$ and $\lab: E \to \Sigma$.
We let $E_p \df \{e \in E \mid \ploc(e) = p\}$ be the set of events executed by process $p$.
A binary relation ${\prel} \subseteq E \times E$ connects consecutive events of a process:
For all $(e,f) \in {\prel}$, there is $p \in \Procs$ such that both $e$ and $f$ are in $E_p$.
Moreover, for all $p \in \Procs$, ${\prel} \cap (E_p \times E_p)$ is the direct successor relation of some total order on $E_p$.
Finally, the message relation ${\mrel} \subseteq E \times E$ connects a pair of events that represent a message exchange.
We require that
\begin{itemize}
\item every event belongs to at most one pair from $\mrel$, and

\item for all $(e,f),(e',f') \in {\mrel}$ such that $e,e' \in E_p$ and $f,f' \in E_q$, we have
both $p \neq q$ and (FIFO) $e \prel^\ast e'$ iff $f \prel^\ast f'$.
\end{itemize}
Finally, ${\le} \df ({\prel} \cup {\mrel})^\ast$ must be a partial order.
Its strict part is denoted ${<} = ({\prel} \cup {\mrel})^+$.

We call $M=(E,\prel,\mrel,\ploc,\lambda)$ a \emph{message sequence chart (MSC)} over $\Procs$ and $\Sigma$. The set of message sequence charts is denoted by $\MSCs{\Procs}{\Sigma}$.

\begin{example}\label{ex:msc}
Let us come back to the MSC from Figure~\ref{fig:msc}.
We have $\ploc(e_2) = p$, $\lambda(e_2) = \abullet$, $\lambda(f_2) = \bbullet$,
 and $\lambda(g_i) = \nbullet$ for all $i \in \{0,\ldots,7\}$.
The process relation restricted to $p$ is $e_0 \prel e_1 \prel \ldots \prel e_7$.
We also have $g_0 \prel g_1 \prel \ldots$ and $f_0 \prel f_1 \prel \ldots$
Concerning the message relation, $e_4 \mrel f_5$ and $e_7 \mrel g_6$, among others.
\end{example}

\tikzstyle{acirc} = [draw, fill, circle, inner sep=0, minimum size=0.25cm, orange!60, draw=black]
\tikzstyle{bcirc} = [draw, fill, rectangle, inner sep=0, minimum size=0.2cm, blue!50, draw=black]
\tikzstyle{ncirc} = [draw, fill=white, diamond, inner sep=0, minimum size=0.3cm]

\begin{figure}[h]
\centering
    \begin{tikzpicture}[semithick,>=stealth]
      \draw[->] (-0.25,0) -- (12.5,0);
      \draw[->] (-0.25,1) -- (12.5,1);
      \draw[->] (-0.25,2) -- (12.5,2);

     \node[bcirc,label=above:$e_0$] (e0) at (0.25,2) {};
     \node[bcirc,label=below:$f_0$] (f0) at (0.25,0) {};
     \draw[->] (e0) -- (f0);

     \node[bcirc,label=above:$e_1$] (e1) at (1.125,2) {};
     \node[ncirc,label=below:$g_0$] (g0) at (1.125,1) {};
     \draw[->] (e1) -- (g0);

     \node[acirc,label=above:$e_2$] (e2) at (2,2) {};
     \node[acirc,label=below:$f_1$] (f1) at (2,0) {};
     \draw[->] (e2) -- (f1);

     \node[ncirc,label=above:$g_1$] (g1) at (3,1) {};
     \node[bcirc,label=below:$f_2$] (f2) at (3.75,0) {};
     \draw[->] (g1) -- (f2);

     \node[bcirc,label=above:$e_3$] (e3) at (4,2) {};
     \node[ncirc,label=below:$g_2$] (g2) at (4,1) {};
     \draw[->] (e3) -- (g2);

     \node[ncirc,label=above:$g_3$] (g3) at (5,1) {};
     \node[bcirc,label=below:$f_3$] (f3) at (5.5,0) {};
     \draw[->] (g3) -- (f3);
     
     \node[bcirc,label=above:$e_4$] (e4) at (5,2) {};
     \node[bcirc,label=below:$f_5$] (f5) at (9,0) {};
     \draw[->] (e4) -- (f5);

     \node[acirc,label=above:$e_5$] (e5) at (6,2) {};
     \node[ncirc,label=below:$g_4$] (g4) at (6,1) {};
     \draw[->] (e5) -- (g4);

     \node[ncirc,label=above:$g_5$] (g5) at (8,1) {};
     \node[acirc,label=below:$f_4$] (f4) at (8,0) {};
     \draw[->] (g5) -- (f4);

     \node[bcirc,label=above:$e_6$] (e6) at (9,2) {};
     \node[bcirc,label=below:$f_6$] (f6) at (10.5,0) {};
     \draw[->] (e6) -- (f6);

     \node[acirc,label=above:$e_7$] (e7) at (11,2) {};
     \node[ncirc,label=below:$g_6$] (g6) at (11,1) {};
     \draw[->] (e7) -- (g6);

     \node[ncirc,label=above:$g_7$] (g7) at (12,1) {};
     \node[acirc,label=below:$f_7$] (f7) at (12,0) {};
     \draw[->] (g7) -- (f7);

      \node at (-0.5,0) {$q$};
      \node at (-0.5,1) {$r$};
      \node at (-0.5,2) {$p$};            
    \end{tikzpicture}
    \caption{A message sequence chart\label{fig:msc}}
  \end{figure}
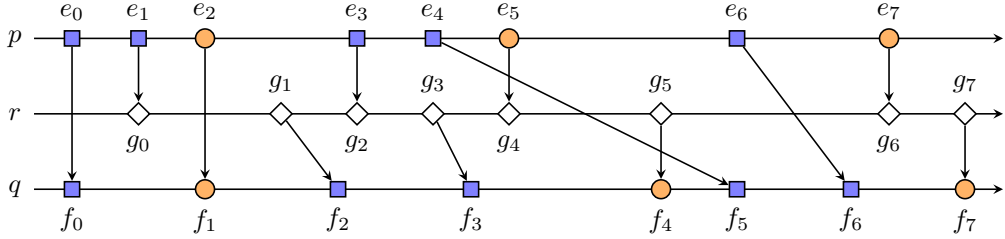

\subparagraph{Runs and the Language of a CFM.}

Let $\C = ((\A_p)_{p \in \Procs},\Msg,\Acc)$ be a CFM and $M=(E,\prel,\mrel,\ploc,\lambda)$ be an MSC over $\Procs$ and $\Sigma$. A \emph{run} of $\C$ on $M$ associates with
every event $e \in E_p$ ($p\in\Procs$) the transition $\rho(e) \in\Delta_p$ that
is executed at $e$.  We require that
\begin{enumerate}
\item for all events $e \in E$, we have $\tlabel(\rho(e)) = \lambda(e)$,
\item for all processes $p \in \Procs$ such that $E_p \neq \emptyset$, we have $\source(\rho(e)) = \init_p$ where $e$ is the first event of $p$ (i.e., $e$ does not have a $\prel$-predecessor),
\item for all process edges $(e,f) \in {\prel}$, we have $\target(\rho(e)) = \source(\rho(f))$,
\item for all local events $e \in E$ ($e$ is neither a send nor a receive), $\rho(e)$ is a local transition, and

\item for all message edges $(e,f) \in {\mrel}$, say, with $e \in E_p$ and $f \in
E_q$, $\rho(e)\in\Delta_p$ is a send transition and $\rho(f)\in\Delta_q$ is a
receive transition such that $\tmsg(\rho(e)) = \tmsg(\rho(f))$,
$\receiver(\rho(e)) = q$, and $\sender(\rho(f)) = p$.
\end{enumerate}

To determine whether $\rho$ is accepting, we collect the last state of every
process $p$.  If $E_p \neq \emptyset$, then let $s_p$ be $\target(\rho(e))$
where $e$ is the last event of $E_p$.  Otherwise, let $s_p = \init_p$.  Now,
$\rho$ is said to be \emph{accepting} if $(s_p)_{p \in \Procs} \in \Acc$.

Finally, the \emph{language} of $\A$ is $L(\A) \df \{M \in \MSCs{\Procs}{\Sigma} \mid$ there is an accepting run of $\A$ on $M\}$.
For example, the MSC from Figure~\ref{fig:msc} is in the language of the \CFM from Figure~\ref{fig:cfm}.


\section{The Gossip Problem}\label{sec:gossip}

We are looking for a protocol (a CFM) that solves the gossip problem: When a
process $q$ receives a message at some event $f \in E_q$, it should be able to tell
what the most recent information is that it has on another process, say $p$.  More
precisely, it should determine the label $\lambda(e)$ of the last (i.e., most
recent) event $e$ of $E_p$ that is in the (strict) past of $f$.  For example,
consider the MSC in Figure~\ref{fig:gossip-msc} (for the moment, we ignore the bottom
part of the figure).  At the time of executing event
$f_5$, process $q$ is supposed to ``output'' $\abullet$, since the most recent
event on $p$ is $e_5$.

Let us formally define what it means to be the \emph{most recent} event.
For all $f \in E$ and $p \in P$, we define
$\rPastp p f  = \{e \in E_p \mid e < f\}$
to be the set of events on process $p$ that are in the \emph{past} of $f$.
We let
\[
  \lastp p f = \begin{cases}
    \max (\rPastp p f)& \text{if } \rPastp p f \neq \emptyset \\
    \botp p & \text{otherwise} \, .
  \end{cases}
\]
Thus, $\lastp p f$ is the most recent event of $p$ in the past of $f$.

\begin{example}
Consider the MSC from Figure~\ref{fig:gossip-msc}. We have
$\rPastp p {f_5} = \{e_0,\ldots,e_5\}$ and, therefore, $\lastp p {f_5} = e_5$.
Moreover, $\lastp p {f_2} = e_2$.
\end{example}

The CFM from Figure~\ref{fig:cfm} (cf.\ Example~\ref{ex:cfm}) can be seen as a first (na{\"i}ve) attempt to solve the gossip problem. When $q$ receives a message from $p$, it ``outputs'' the color of the sending event, and when $q$ receives a message from $r$, it outputs the color transmitted by $r$.
However, both rules are erroneous: Consider the MSC in Figure~\ref{fig:msc}.
At $f_2$ and $f_5$, process $q$ should have announced $\abullet$, but it outputs $\bbullet$. Actually, what we would like to have is the behavior depicted in Figure~\ref{fig:gossip-msc} where, for all $i \in \{0,\ldots,7\}$, we get $\lambda(f_i) = \lambda(\lastp p {f_i})$.

\begin{figure}[h]
\centering
    \begin{tikzpicture}[semithick,>=stealth]
      \draw[->] (-0.25,0) -- (12.5,0);
      \draw[->] (-0.25,1) -- (12.5,1);
      \draw[->] (-0.25,2) -- (12.5,2);

     \node[bcirc,label=above:$e_0$] (e0) at (0.25,2) {};
     \node[bcirc,label=below:$f_0$] (f0) at (0.25,0) {};
     \draw[->] (e0) -- (f0);

     \node[bcirc,label=above:$e_1$] (e1) at (1.125,2) {};
     \node[ncirc,label=below:$g_0$] (g0) at (1.125,1) {};
     \draw[->] (e1) -- (g0);

     \node[acirc,label=above:$e_2$] (e2) at (2,2) {};
     \node[acirc,label=below:$f_1$] (f1) at (2,0) {};
     \draw[->] (e2) -- (f1);

     \node[ncirc,label=above:$g_1$] (g1) at (3,1) {};
     \node[acirc,label=below:$f_2$] (f2) at (3.75,0) {};
     \draw[->] (g1) -- (f2);

     \node[bcirc,label=above:$e_3$] (e3) at (4,2) {};
     \node[ncirc,label=below:$g_2$] (g2) at (4,1) {};
     \draw[->] (e3) -- (g2);

     \node[ncirc,label=above:$g_3$] (g3) at (5,1) {};
     \node[bcirc,label=below:$f_3$] (f3) at (5.5,0) {};
     \draw[->] (g3) -- (f3);
     
     \node[bcirc,label=above:$e_4$] (e4) at (5,2) {};
     \node[acirc,label=below:$f_5$] (f5) at (8.75,0) {};
     \draw[->] (e4) -- (f5);

     \node[acirc,label=above:$e_5$] (e5) at (6,2) {};
     \node[ncirc,label=below:$g_4$] (g4) at (6,1) {};
     \draw[->] (e5) -- (g4);

     \node[ncirc,label=above:$g_5$] (g5) at (7.5,1) {};
     \node[acirc,label=below:$f_4$] (f4) at (7.5,0) {};
     \draw[->] (g5) -- (f4);

     \node[bcirc,label=above:$e_6$] (e6) at (9,2) {};
     \node[bcirc,label=below:$f_6$] (f6) at (10.5,0) {};
     \draw[->] (e6) -- (f6);

     \node[acirc,label=above:$e_7$] (e7) at (11,2) {};
     \node[ncirc,label=below:$g_6$] (g6) at (11,1) {};
     \draw[->] (e7) -- (g6);

     \node[ncirc,label=above:$g_7$] (g7) at (12,1) {};
     \node[acirc,label=below:$f_7$] (f7) at (12,0) {};
     \draw[->] (g7) -- (f7);

      \node at (-0.5,0) {$q$};
      \node at (-0.5,1) {$r$};
      \node at (-0.5,2) {$p$};
      
      \node at (2,-1.2) {\scalebox{0.75}{$\begin{array}{c}\fa{\p}{{\pamove}\p'}(f_1)\\=f_3\end{array}$}};
       \node at (3.75,-1.2) {\scalebox{0.75}{$\begin{array}{c}\fa{\p}{{\pamove}\p'}(f_2)\\=f_3\end{array}$}};
      \node at (5.5,-1.2) {\scalebox{0.75}{$\begin{array}{c}\fa{\p}{{\pamove}\p'}(f_3)\\=f_3\end{array}$}};
      \node at (7.2,-1.2) {\scalebox{0.75}{$\begin{array}{c}\fa{\p'}{{\ppmove}\p}(f_4)\\=f_6\end{array}$}};
      \node at (8.8,-1.2) {\scalebox{0.75}{$\begin{array}{c}\fa{\p'}{{\ppmove}\p}(f_5)\\=f_6\end{array}$}};
      \node at (10.4,-1.2) {\scalebox{0.75}{$\begin{array}{c}\fa{\p'}{{\ppmove}\p}(f_6)\\=f_6\end{array}$}};
      \node at (12,-1.2) {\scalebox{0.75}{$\begin{array}{c}\fa{\p}{{\pamove}\p'}(f_7)\\=f_7\end{array}$}};

      \node at (0.25,-2) {\scalebox{0.9}{$\p' \pless_{f_0} \p$}};
      \node at (2,-2) {\scalebox{0.9}{$\p' \pless_{f_1} \p$}};
      \node at (3.75,-2) {\scalebox{0.9}{$\p' \pless_{f_2} \p$}};
      \node at (5.5,-2) {\scalebox{0.9}{$\p \ple_{f_3} \p'$}};
      \node at (7.5,-2) {\scalebox{0.9}{$\p \ple_{f_4} \p'$}};
      \node at (9,-2) {\scalebox{0.9}{$\p \ple_{f_5} \p'$}};
      \node at (10.5,-2) {\scalebox{0.9}{$\p' \pless_{f_6} \p$}};
      \node at (12,-2) {\scalebox{0.9}{$\p \ple_{f_7} \p'$}};

      \node at (0.25,-2.5) {$\underbrace{\hspace{3.7em}}_{\textup{Lemma}~\ref{ple-char}.\ref{lem:initle}}$};
      \node at (3.75,-2.5) {$\underbrace{\hspace{13.7em}}_{\textup{Lemma}~\ref{ple-char}.\ref{lem:gtle}}$};
      \node at (9,-2.5) {$\underbrace{\hspace{12.2em}}_{\textup{Lemma}~\ref{ple-char}.\ref{lem:legt}}$};
      \node at (12.,-2.5) {$\underbrace{\hspace{3.7em}}_{\textup{Lemma}~\ref{ple-char}.\ref{lem:gtle}}$};
                  
    \end{tikzpicture}
    \caption{Comparison of $\p = {\mmove{p}{q}}{\pamove}$ and $\p' = {\mmove{p}{r}}{\pamove}{\mmove{r}{q}}{\pamove}$\label{fig:gossip-msc}}
  \end{figure}

Formally, we will treat ``outputs'' in terms of additional labels from another finite alphabet $\Xi$.
To do so, we consider \CFMs and MSCs over $\Procs$ and $\Sigma \times \Xi$. An MSC over $\Procs$ and $\Sigma \times \Xi$ is called an \emph{extended MSC}. It can be interpreted, in the expected way, as a pair $(M,\xi)$ where $M=(E,\prel,\mrel,\ploc,\lambda)$ is an MSC over $\Procs$ and $\Sigma$, and $\xi: E \to \Xi$. If $(M,\xi)$ is accepted by the gossip \CFM, $\xi(e)$ shall provide the latest information that $e$ has about any other process. That is, $\Xi$ is the finite set of functions from $\Procs$ to $\Sigma \cup \{\botp p\}$. We assume $\bot \not\in \Sigma$ and $\lambda(\bot) = \bot$.

We are now looking for a \CFM $\Agossip$ over $\Procs$ and $\Sigma \times \Xi$ that has the following property:
\begin{center}
\fbox{\parbox{0.85\textwidth}{\it
The language $L(\Agossip)$ is the set of extended MSCs
$((E,\prel,\mrel,\ploc,\lambda),\xi)$ such that, for all events $e \in E$,
$\xi(e)$ is the function from $\Procs$ to $\Sigma \uplus \{\bot\}$ defined
by $\xi(e)(p) = \lambda(\lastp p e)$.
}\;}
\end{center}

Thus, the gossip CFM $\Agossip$ allows a process to infer, at any time,
the most recent information that it has about all other processes wrt.\ the \emph{causal past}.
In fact, we will pursue a more general approach based on \emph{path expressions}.
A path expression allows us to define what we actually mean by ``causal past''.
More precisely, it acts as a filter that considers only events in the past that are 
(co-)reachable via certain paths (e.g., visiting only certain processes or at least one event with a given label).
Path expressions and their properties are studied in Section~\ref{sec:paths}.
In Section~\ref{sec:preorder-cfm}, we construct a CFM
that, at any event, is able to tell which of two path expressions provides more recent information.
We then obtain $\Agossip$ as a corollary.

\section{Comparing Path Expressions}\label{sec:comp}\label{sec:paths}

In this section, we introduce path expressions and establish some of their properties.

\subsection{Path Expressions}

Let us again look at our running example (cf.\ Figure~\ref{fig:gossip-msc}).
In the gossip problem, we need to know whether the most
recent information has been provided along a message from $p$ to $q$,
which will be represented by the path expression $\p = {\mmove{p}{q}}{\pamove}$, or via the intermediate process $r$,
represented by the path expression $\p' = {\mmove{p}{r}}{\pamove}{\mmove{r}{q}}{\pamove}$.
We will write $\p \ple_{f_5} \p'$
to describe the fact that $\last \p {f_5} \le \last {\p'} {f_5}$,
where $\last \p {f_5}=e_4$ and $\last {\p'} {f_5}=e_5$
denote the most recent events from which a $\p$-path and, respectively,
$\pi'$-path to $f_5$ exist.

Let us be more formal.  A path expression is simply a finite word over the
alphabet
$
  \moves = \{ \pmove, \pamove \} \cup \{ \mmove p q \mid p,q \in P$, $p \neq q \}
  \cup \{\amove a \mid a \in \Sigma\}
$.
We let $\varepsilon$ be the empty word and introduce $\ppmove$ as a macro for the word ${\pmove}{\pamove}$.
Let $M = (E,\prel,\mrel,\ploc,\lambda)$ be an MSC. For all path expressions $\p \in \moves^\ast$, we define a relation $\sem M \p \subseteq E \times E$ as follows:\\
{
\begin{minipage}[b]{0.3\textwidth}
\begin{align*}
  \sem M {\varepsilon} & = \{ (e,e) \mid e \in E \} \\
  \sem M {\amove a} & = \{ (e,e) \in E \times E \mid \lambda(e) = a \} \\
  \sem M {\mmove p q} & = \{ (e,f) \in E_p \times E_q \mid e \mrel f \}
\end{align*}
\end{minipage}
\begin{minipage}[b]{0.3\textwidth}
\begin{align*}
  \sem M {\pmove} & = \{ (e,f) \in E \times E \mid e \prel f \} \\
  \sem M {\pamove} & = \{ (e,f) \in E \times E \mid e \pamove f \}
  \end{align*}
\end{minipage}\vspace{-0.6ex}
\begin{align*}
  \sem M {\p\p'} & = \sem{M}{\p}\circ\sem{M}{\p'}
                   = \{ (e,g) \in E \times E \mid \exists f \in E:
                        (e,f)\in\sem{M}{\p} \land (f,g)\in\sem{M}{\p'} \} \, .
\end{align*}
}
\begin{example}
  Consider the MSC $M$ from Figure~\ref{fig:gossip-msc}.
  For $\p = {\mmove{p}{q}}{\pamove}$ and $\p' = {\mmove{p}{r}}{\pamove}{\mmove{r}{q}}{\pamove}$,
  we have  $(e_4,f_5) \in \sem M {\p}$ and $(e_5,f_5) \in  \sem M {\p'}$.
  Moreover, $\sem M {\bbullet\pmove\bbullet{\mmove{p}{q}}} = \{(e_3,f_5)\}$.
\end{example}

We say that a pair of processes $(p,q)$ is \emph{compatible} with $\p \in
\moves^\ast$ if $\p$ may describe a path from $p$ to $q$.  Formally, we define
$\Comp{\p} \subseteq \Procs \times \Procs$ inductively as follows:
$\Comp{\varepsilon} = \Comp{\amove a} = \Comp{\pmove} = \Comp{\pamove} = \{(p,p) \mid p \in \Procs\}$, $\Comp{\mmove p q} = \{(p,q)\}$,
and $\Comp{\p \p'} = \Comp{\p} \circ \Comp{\p'}$, where $\circ$ denotes the usual product of binary relations.
Note that, for each $p$, there is at most one $q$ such that $(p,q) \in
\Comp{\p}$.  Conversely, for each $q$, there is at most one $p$ such that $(p,q)
\in \Comp{\p}$.  We denote by $\Paths_{p,q}$ the set of path expressions 
$\p\in\moves^*$ such that $(p,q)\in\Comp{\p}$.

\begin{example}
  We have $\Comp{{\mmove{p}{r}}{\pamove}{\mmove{r}{q}}{\pamove}} = \{(p,q)\}$,
  $\Comp{{\mmove{p}{q}}{\pamove}{\mmove{q}{p}}} = \{(p,p)\}$,
  $\Comp{\bbullet\pmove\bbullet{\mmove{p}{q}}} = \{(p,q)\}$, and
   $\Comp{{\mmove{p}{q}}{\pamove}{\mmove{r}{p}}} = \emptyset$.
\end{example}

Next, given $\p \in \moves^\ast$ and $e \in E$, we define $\last \p e$ and $\first{\p}{e}$, which denote the most recent (resp.\ very next) event from which there is a $\p$-path to $e$
(resp.\ to which there is a $\p$-path from $e$). We extend $\le$ with the new elements $\bot$ and $\top$ by setting $\bot < e < \top$ for all $e \in E$. As before, we will assume $\lambda(\bot) = \bot$. Moreover, $\lambda(\top) = \top$.

All events $f$ such that $\Pto M f \p
e$ (resp.\ $\Pto M e \p f$) are located on the same process. Hence, we can 
define, with $\max\emptyset=\bot$ and $\min\emptyset = \top$:
\begin{align*}
  \last{\p}{e} & =\max\,\sem{M}{\p}^{-1}(e)
  = \max \{ f \in E \mid \Pto M f \p e \} \\
  \first{\p}{e} & =\min\,\sem{M}{\p}(e)
  = \min \{ f \in E \mid \Pto M e \p f \} \,.
\end{align*}

The next lemma states that $\lastf \p$ and $\firstf \p$ are monotone.

\begin{lemma}\label{lem:monotone}
  Let $\p \in \moves^\ast$ and $e,f \in E$. The following hold:
  \begin{enumerate}
    \item If $\last \p e \neq \botp p$, $\last{\p}{f} \neq \botp p$, and $e\pamove f$, then $\last \p e \le \last \p f$.
    
    \item If $\first \p e \neq \topp q$, $\first{\p}{f} \neq \topp q$, and $e\pamove f$, then $\first \p e \le \first \p f$.

    \item If $\last{\p}{e}\neq\botp{p}$, then $\last{\p{\pamove}}{e}=\last{\p}{e}$.

    \item If $\first{\p}{e}\neq\topp{q}$, then $\first{{\pamove}\p}{e}=\first{\p}{e}$.
  \end{enumerate}
\end{lemma}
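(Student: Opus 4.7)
The plan is to prove parts 1 and 2 by induction on the length of $\p$, then to derive parts 3 and 4 as immediate corollaries. The arguments for 1/3 and 2/4 are symmetric, so I focus on parts 1 and 3.

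For part 1, I would induct on $\p$ by peeling off the last symbol, writing $\p = \p' x$ with $x \in \moves$; the base case $\p = \varepsilon$ is immediate because $\last{\varepsilon}{e} = e$. For $x \in \{\amove a,\pmove,\mmove u v\}$, a direct unfolding of $\sem M{\p' x} = \sem M{\p'} \circ \sem M x$ shows that $(f',e) \in \sem M{\p' x}$ iff $(f',\phi_x(e)) \in \sem M{\p'}$, where $\phi_x(e)$ is a \emph{single} event determined by $e$: namely $e$ itself (provided $\lambda(e)=a$), the $\prel$-predecessor of $e$, or the unique sender of $e$ through a $u$-to-$v$ message, respectively. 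Hence $\last{\p' x}{e} = \last{\p'}{\phi_x(e)}$, and since $\phi_x$ is monotone on its domain inside the target process of $x$ — using the FIFO property of MSCs in the $\mmove u v$ subcase — the hypothesis $e \pamove f$ yields $\phi_x(e) \pamove \phi_x(f)$, after which the inductive hypothesis for $\p'$ closes each of these cases.

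The case $x = \pamove$ is the main subtlety, since many intermediate events may witness $(f',e) \in \sem M{\p'\pamove}$. Here
\[
  \last{\p'\pamove}{e} \;=\; \max\{\,\last{\p'}{g} \mid g \pamove e \text{ and } \last{\p'}{g} \neq \bot\,\}.
\]
The candidate set $\{g \mid g \pamove e,\, \last{\p'}{g} \neq \bot\}$ is a subset of a single process, hence totally ordered, and by the inductive hypothesis for $\p'$ the map $g \mapsto \last{\p'}{g}$ is monotone on it. Therefore the maximum is attained at the largest valid $g$, call it $g_e^{\#}$. For $e \pamove f$ the candidate set for $e$ is included in that for $f$, so $g_e^{\#} \pamove g_f^{\#}$, and a second application of the inductive hypothesis yields $\last{\p'}{g_e^{\#}} \le \last{\p'}{g_f^{\#}}$, as required.

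Parts 3 and 4 then follow. For part 3, the same displayed formula (applied to $\p$ in place of $\p'$) gives
\[
  \last{\p\pamove}{e} \;=\; \max\{\,\last \p g \mid g \pamove e,\; \last \p g \neq \bot\,\}.
\]
Since $\last \p e \neq \bot$, the element $g = e$ lies in this set, yielding $\last{\p\pamove}{e} \ge \last \p e$; conversely every $g$ in the set satisfies $\last \p g \le \last \p e$ by part 1, giving the reverse inequality. The main obstacle is the $\pamove$ subcase of the induction: the maximum ranges over an entire causal interval inside a single process, and one must invoke the inductive hypothesis for $\p'$ twice — first to recognise that the maximum is attained at a distinguished event $g_e^{\#}$, then to compare the distinguished events $g_e^{\#}$ and $g_f^{\#}$ themselves.
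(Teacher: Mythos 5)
Your proof is correct and follows essentially the same route as the paper's: induction on $\p$ peeling off the last symbol for part~1 (with the FIFO property handling the $\mmove{u}{v}$ case), and part~3 derived from part~1 via the inclusion $\sem{M}{\p}\subseteq\sem{M}{\p{\pamove}}$. The only divergence is the $\p'{\pamove}$ subcase of part~1, where the paper argues in one line that $(\last{\p}{e},e)\in\sem{M}{\p}$ and $e\pamove f$ already give $(\last{\p}{e},f)\in\sem{M}{\p}$, so $\last{\p}{e}$ is a candidate for the maximum defining $\last{\p}{f}$; your two-fold use of the induction hypothesis there is valid but does more work than necessary.
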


\begin{proof}
  We show 1.\ and 3. The other two cases are analogous.
  For 1., the proof is by induction on $\p$.  We assume $\last \p e \neq \bot$
  and $\last{\p}{f} \neq \bot$.  The case $\p = \varepsilon$ is immediate.

  Suppose $\p = \p' \mmove {r} q$. There exists
  some $e' \in E_{r}$ such that $e' \mrel e$ and
  $\last \p e = \last {\p'} {e'}$.
  Similarly, there exists $f' \in E_{r}$ such that $f' \mrel f$ and
  $\last \p f = \last {\p'} {f'}$.
  Because of the FIFO ordering, we have $e' \pamove f'$, and by induction
  hypothesis, we get $\last \p e \le \last \p f$.
  
  The cases $\p = \p' {\pmove}$ and $\p = \p' \amove a$ are similar.

  Suppose $\p=\p'{\pamove}$. Due to $(\last{\p}{e},e)\in\sem{M}{\p}$ and $e\pamove f$,
  we have $(\last{\p}{e},f)\in\sem{M}{\p}$.
  By definition of $\last \p f$, we then get $\last \p e \le \last \p f$.
  
  For 3., we assume that $\last{\p}{e}\neq\botp{p}$. We have 
  $\sem{M}{\p}\subseteq\sem{M}{\p{\pamove}}$ hence we get 
  $g=\last{\p}{e}\leq\last{\p{\pamove}}{e}=g'$. Now, there is $e'$ such that 
  $g'=\last{\p}{e'}$ and $e'\pamove e$. From 1., we deduce that $g'\leq g$.
\end{proof}

Now, let us define formally when a path $\p'$ provides (strictly) more recent
information than a path $\p$.  Fix $p,q \in P$.
For all $e \in
E_q$ and $\p, \p' \in \Paths_{p,q}$, we let
\begin{align*}
  \p \ple_e \p'
  &\qquad\text{if}\qquad
    \last \p e \le \last {\p'} e
  \\
  \p \pless_e \p'
  &\qquad\text{if}\qquad  \last \p e < \last {\p'} e,\ \text{i.e., }
    \p' \not \ple_e \p
      \, .
\end{align*}
The definition is illustrated in Figure~\ref{fig:gossip-msc}.

Recall that our goal is to construct a \CFM computing the label of $\lastp p e$
for all events $e \in E_q$.  Later (in Section~\ref{sec:cfm-labels}), we show that,
for all $\p$, there exists a \CFM associating with each event $e$ the label
of $\last \p e$.
Thus, it will be enough to construct a \CFM that identifies, for each event
$e$, some $\p \in \moves^\ast$ such that $\last \p e = \lastp p e$.
Moreover, path expressions of bounded length will suffice: If $f < e$, then
there is a path from $f$ to $e$ that enters and leaves each process at most once. 

To achieve our goal, we will build a \CFM $\Ale$ computing the total preorders
$\ple_e$ (restricted to path expressions of bounded size) for all events $e$ on a given process $q$.  In particular, $\Ale$ is
sufficient to determine, for all $e\in E_q$ and $p \in \Procs$, some 
$\p\in\Paths_{p,q}$ such that $\lastp p e = \last \p e$.  The idea is that
$\Ale$ first determines $\ple_e$ for the minimal event $e$ in $E_q$. Then, for
all $\p,\p' \in \Paths_{p,q}$, it computes the set of events where the order
between $\p$ and $\p'$ is switched. In Figure~\ref{fig:gossip-msc}, these
switching events are $f_3$, $f_6$, and $f_7$.  The next subsection provides a
characterization of the preorder that can then (in
Section~\ref{sec:gossip-cfm}) be implemented as a CFM.

\subsection{A Characterization of $\ple_e$}

Given $p,q\in\Procs$ and $\p,\p' \in \Paths_{p,q}$, we define the function
$\fa \p {\p'}\colon E_q \to E_q \cup \{\bot,\top\}$ (omitting index $(p,q)$)
as follows: $\fa \p {\p'} (e) = \first {\p'} {\last \p e}$, with
$\first {\p'} {\bot} = \bot$.
So we have $\fa \p {\p'} (e) = f \in E_q$ if there is $g\in E_p$ such that
$\last \p e = g$ and $\first {\p'} g = f$,
$\fa \p {\p'} (e) = \bot$ if $\last \p e = \bot$,
and $\fa \p {\p'} (e) = \top$ if $\last \p e = g \in E_p$ but
$\first {\p'} g = \top$.

From Lemma~\ref{lem:monotone}, we can deduce monotonicity of $\fa \p {\p'}$:

\begin{lemma}\label{lem:monotone2}
  Suppose $e \pamove f$ and $\fa \p {\p'} (e),\fa \p {\p'} (f) \in E_q$.
  Then, $\fa \p {\p'} (e) \le \fa \p {\p'} (f)$.
\end{lemma}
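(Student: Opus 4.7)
The plan is to reduce the statement to a composition of the two monotonicity properties already established in Lemma~\ref{lem:monotone}, namely parts~1 and~2. The key observation is that the hypothesis $\fa{\p}{\p'}(e), \fa{\p}{\p'}(f) \in E_q$ forces everything in sight to be a bona fide event, not $\bot$ or $\top$, so both monotonicity lemmas are applicable.

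First, I would unfold the definition $\fa{\p}{\p'}(x) = \first{\p'}{\last{\p}{x}}$ and note that, since this value lies in $E_q$ for both $x = e$ and $x = f$, we must have $\last{\p}{e}, \last{\p}{f} \in E_p$ (neither is $\bot$, else $\fa{\p}{\p'}$ would be $\bot$) and $\first{\p'}{\last{\p}{e}}, \first{\p'}{\last{\p}{f}} \in E_q$ (neither is $\top$). From $e \pamove f$ together with $\last{\p}{e} \neq \bot$ and $\last{\p}{f} \neq \bot$, Lemma~\ref{lem:monotone}.1 gives $\last{\p}{e} \le \last{\p}{f}$. Write $g = \last{\p}{e}$ and $h = \last{\p}{f}$.

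Next I would point out that $g$ and $h$ are both events on process $p$, so the happens-before inequality $g \le h$ actually coincides with the process-order relation $g \pamove h$: on a single process, the total order given by $\prel$ already agrees with $\le$ (any happens-before chain between two events of $p$ can be replaced by the $\prel$-chain between them). With $g \pamove h$ in hand, plus $\first{\p'}{g} \neq \top$ and $\first{\p'}{h} \neq \top$, Lemma~\ref{lem:monotone}.2 applies and yields $\first{\p'}{g} \le \first{\p'}{h}$, which is exactly $\fa{\p}{\p'}(e) \le \fa{\p}{\p'}(f)$.

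There is really no hard step here; the only subtlety is the bookkeeping to ensure that the intermediate values $\last{\p}{e}$, $\last{\p}{f}$ are not $\bot$ and that we can legitimately pass from $\le$ back to $\pamove$ when comparing two events on the same process before invoking monotonicity of $\firstf{\p'}$. So the proof will be short: two invocations of Lemma~\ref{lem:monotone}, separated by the observation that same-process events comparable by $\le$ are comparable by $\pamove$.
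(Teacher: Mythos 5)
Your proof is correct and matches the paper's intended argument: the paper gives no explicit proof of Lemma~\ref{lem:monotone2}, stating only that it follows from Lemma~\ref{lem:monotone}, and your composition of parts~1 and~2 (with the bookkeeping that the hypotheses rule out $\bot$ and $\top$, and that $\le$ restricted to a single process coincides with $\pamove$) is exactly the deduction meant.
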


\begin{example}
  Consider, again, Figure~\ref{fig:gossip-msc} with $\p =
  {\mmove{p}{q}}{\pamove}$ and $\p' =
  {\mmove{p}{r}}{\pamove}{\mmove{r}{q}}{\pamove}$.  We get $\fa \p
  {{\pamove}\p'} (f_3) = f_3$ and $\fa \p {{\pamove}\p'} (f_5) = f_4$. 
  Since $\last {\p'} {f_0} = \bot$ and $\last {\p} {f_0}=e_0 \neq \bot$,
  we have $\p' \pless_{f_0} \p$.
\end{example}

\begin{figure}
\centering
  \begin{minipage}[b]{0.18\textwidth}
    \scalebox{0.85}{
    \begin{tikzpicture}[scale=0.8,font=\footnotesize,inner sep=2pt,auto,>=stealth]
      \draw (0,0) -- (2,0);
      \draw (0,2) -- (2,2);

      \node (q) at (-0.3,0) {$q$};
      \node (p) at (-0.3,2) {$p$};
    
      \node[dot,label=below:{$f$}] (e) at (0.5,0) {};
      \node[dot] (g) at (0.5,2) {};
      \node[dot] (h) at (1.7,2) {};
      \node[anchor=west] at ($(g) + (-0.2,0.3)$) {$g ~~\le~~ g'$};

      \path
      (g) edge[decorate,decoration={snake,post length=1mm,amplitude=0.6mm},->] node[left] {$\p$} (e)
      (h) edge[decorate,decoration={snake,post length=1mm,amplitude=0.6mm},->] node[right] {$\p'$} (e);

      \path[purple,very thick,shorten > =5pt, shorten < = 5pt,->]
      (e) edge[bend left=40] node[left] {$\mathsf{pred}_\p$} (g)
      (g) edge[bend left=80,looseness=2] node[right] {$\mathsf{succ}_{{\pamove}\p'}$} (e);
    \end{tikzpicture}}
  \end{minipage}
  \begin{minipage}[b]{0.38\textwidth}
        \scalebox{0.85}{
    \begin{tikzpicture}[scale=0.8,font=\footnotesize,inner sep=2pt,auto,>=stealth]
      \draw (0,0) -- (6,0);
      \draw (0,2) -- (6,2);

      \node (q) at (-0.3,0) {$q$};
      \node (p) at (-0.3,2) {$p$};

      \node[anchor=west] at (-0.3,2.4)
      {$\last {\p'{\pamove}} e ~<~ \last {\p{\pamove}} e$};
      \node[dot] (p) at (0.5,2) {};
      \node[dot] (p') at ($(p)+(2.5,0)$) {};
      \node[dot] (g') at (4.7,2) {};
      \node[dot,right=0.8cm of g'] (g) {};
      \node[above=0.1cm of g'] {$g\vphantom{g'}$};
      \node[above right= 0.1cm and 0.2cm of g'] {$\vphantom{g'}{\le}$};
      \node[above=0.1cm of g] {$g'$};
      \node[dot,label={[name=el]below:{$e$\vphantom{$f$}}}] (e) at (4.3,0) {};
      \node[dot,label={[name=fl]below:{$f$}}] (f) at (5.8,0) {};

      \path
      (p) edge[decorate,decoration={snake,post length=1mm,amplitude=0.6mm},->]
      node[left,xshift=-5pt] {$\p'{\pamove}$} (e)
      (p') edge[decorate,decoration={snake,post length=1mm,amplitude=0.6mm},->]
      node[left,pos=0.3] {$\p{\pamove}$} (e)
      (g') edge[decorate,decoration={snake,post length=1mm,amplitude=0.6mm},->]
      node[right,pos=0.3] {$\p$} (f)
      (g) edge[decorate,decoration={snake,post length=1mm,amplitude=0.6mm},->]
      node[right] {$\p'$} (f);
      \path[purple,very thick,shorten > =5pt, shorten < = 5pt,->]
      (f) edge[bend left=40] node[left,pos=0.7] {$\lastf {\p}$} (g')
      (g') edge[bend left=75,looseness=2] node[right,pos=0.9,xshift=0.1cm] {$\firstf {{\pamove}\p'}$} (f);

      \draw[->] (el) -- (fl);
    \end{tikzpicture}}
  \end{minipage}
  \begin{minipage}[b]{0.40\textwidth}
    \scalebox{0.85}{
    \begin{tikzpicture}[scale=0.8,font=\footnotesize,inner sep=2pt,auto,>=stealth]
      \draw (0,0) -- (6,0);
      \draw (0,2) -- (6,2);

      \node (q) at (-0.3,0) {$q$};
      \node (p) at (-0.3,2) {$p$};

      \node[anchor=west] at (-0.3,2.4)
      {$\last {\p{\pamove}} e ~\le~ \last {\p'{\pamove}} e$};
      \node[dot] (p) at (0.5,2) {};
      \node[dot] (p') at ($(p)+(2.5,0)$) {};
      \node[dot] (g') at (4.7,2) {};
      \node[dot,right=0.8cm of g'] (g) {};
      \node[above=0.1cm of g'] {$g'$};
      \node[above right= 0.1cm and 0.2cm of g'] {$\vphantom{g'}{<}$};
      \node[above=0.1cm of g] {$g\vphantom{g'}$};
      \node[dot,label={[name=el]below:{$e$\vphantom{$f$}}}] (e) at (4.3,0) {};
      \node[dot,label={[name=fl]below:{$f$}}] (f) at (5.8,0) {};

      \path
      (p) edge[decorate,decoration={snake,post length=1mm,amplitude=0.6mm},->]
      node[left,xshift=-5pt] {$\p{\pamove}$} (e)
      (p') edge[decorate,decoration={snake,post length=1mm,amplitude=0.6mm},->]
      node[left,pos=0.3] {$\p'{\pamove}$} (e)
      (g') edge[decorate,decoration={snake,post length=1mm,amplitude=0.6mm},->]
      node[right,pos=0.3] {$\p'$} (f)
      (g) edge[decorate,decoration={snake,post length=1mm,amplitude=0.6mm},->]
      node[right] {$\p$} (f);
      \path[purple,very thick,shorten > =5pt, shorten < = 5pt,->]
      (f) edge[bend left=40] node[left,pos=0.8] {$\lastf {\p'}$} (g')
      (g') edge[bend left=75,looseness=2] node[right,pos=0.9,xshift=0.1cm] {$\firstf {{\ppmove}\p}$} (f);

      \draw[->] (el) -- (fl);
    \end{tikzpicture}}
\end{minipage}
\caption{Lemma~\ref{ple-char}, cases 1., 2., and 3.\label{fig:ple-char}}
\end{figure}

Generally, the relation $\ple_e$ 
can be characterized as follows (cf.\ also Figure~\ref{fig:ple-char}): 

\begin{lemma}\label{ple-char}
  Let $\p,\p' \in \Paths_{p,q}$ with $p,q\in\Procs$, and $f \in E_q$.
  \begin{enumerate}
  \item\label{lem:initle} Assume that there exists no $e$ with $e \prel f$.

    Then, $\p \ple_f \p'$ iff $\last \p f = \botp p$ or
    $\fa \p {\pamove \p'} (f) = f$.
  \item\label{lem:gtle} Assume that there exists $e \in E_q$ such that
    $e \prel f$ and $\p'{\pamove} \pless_e \p{\pamove}$.
    
    Then, $\p \ple_f \p'$ iff $\last {\p} f = \botp p$ or
    $\fa \p {\pamove \p'} (f) = f$.
  \item\label{lem:legt} Assume that there exists $e \in E_q$ such that
    $e \prel f$ and $\p{\pamove} \ple_e \p'{\pamove}$.

    Then, $\p' \pless_f \p$ iff $\last {\p'} f = \botp p$ and
    $\last {\p} f \neq \botp p$, or $\fa {\p'} {\ppmove \p} (f) = f$.
  \end{enumerate}
\end{lemma}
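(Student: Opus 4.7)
The plan is to prove each biconditional separately, establishing both directions. The backward implication is uniform and straightforward in all three cases. In Cases~\ref{lem:initle} and \ref{lem:gtle}, the disjunct $\last \p f = \botp p$ trivially gives $\last \p f \le \last {\p'} f$, while the disjunct $\fa \p {\pamove \p'} (f) = f$ (setting $g = \last \p f$) unpacks to the existence of $h \in E_p$ with $g \pamove h$ and $(h,f) \in \sem M {\p'}$, which yields $\last {\p'} f \geq h \geq g$. Case~\ref{lem:legt} is analogous: either $\last {\p'} f = \botp p$ and $\last \p f \neq \botp p$ directly give $\p' \pless_f \p$, or $\fa {\p'} {\ppmove \p} (f) = f$ produces some $h \in E_p$ with $\last {\p'} f \prel^+ h$ and $(h,f) \in \sem M \p$, so $\last \p f \geq h > \last {\p'} f$.

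For the forward direction, the common first step is to set $g := \last \p f$, assume $g \neq \botp p$, and derive the inequality $\fa \p {\pamove \p'} (f) \le f$ (respectively $\fa {\p'} {\ppmove \p} (f) \le f$ in Case~\ref{lem:legt}). In Cases~\ref{lem:initle} and \ref{lem:gtle}, $\p \ple_f \p'$ yields $g' := \last {\p'} f \geq g$, and since both lie on process $p$ we have $g \pamove g'$; combined with $(g',f) \in \sem M {\p'}$ this gives $(g,f) \in \sem M {\pamove \p'}$ and hence $\first{\pamove \p'}(g) \le f$. In Case~\ref{lem:legt}, the sub-case $\last {\p'} f \neq \botp p$ yields $g' := \last {\p'} f \prel^+ g$, and the analogous computation gives $\first{\ppmove \p}(g') \le f$.

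The main difficulty, and the crux of the lemma, is to show that these inequalities are in fact equalities. Suppose for contradiction that the value is some $f^* \in E_q$ with $f^* \prec f$; since $\prel$ restricted to $E_q$ is the direct-successor relation of a total order, this forces $f^* \prel^+ f$ and hence $f^* \pamove e$ for every $e \prel f$. In Case~\ref{lem:initle}, no such $e$ exists, ruling out such an $f^*$ and yielding $\fa \p {\pamove \p'}(f) = f$. In Cases~\ref{lem:gtle} and~\ref{lem:legt}, we extract a witnessing $h \in E_p$ with $g \pamove h$ (resp.\ $g' \prel^+ h$) and $(h,f^*) \in \sem M {\p'}$ (resp.\ $\sem M \p$); combined with $f^* \pamove e$, this produces the lower bound $\last{\p' \pamove} e \geq h$ (resp.\ $\last{\p \pamove} e \geq h$). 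Lemma~\ref{lem:monotone}, parts~1 and~3, applied to $\p \pamove$ and $\p' \pamove$, then yields the matching upper bound $\last{\p \pamove} e \le \last{\p \pamove} f = g$ (resp.\ $\last{\p' \pamove} e \le g'$, with a short case split on whether the intermediate $\p'$-event on $q$ equals $f$ or lies strictly before it, invoking monotonicity in the latter case). The two bounds together contradict the case hypothesis $\p' \pamove \pless_e \p \pamove$ (resp.\ $\p \pamove \ple_e \p' \pamove$), closing the proof.
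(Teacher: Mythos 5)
Your proposal is correct and follows essentially the same route as the paper's proof: both directions reduce to the equivalence between $\p \ple_f \p'$ and $\fa \p {{\pamove}\p'}(f) \le f$ (unpacked via a witness $h$ with $g \pamove h$), and the upgrade from $\le f$ to $=f$ is obtained by the same monotonicity chain (Lemma~\ref{lem:monotone}) contradicting the hypothesis on the predecessor $e$. The only differences are cosmetic (you bound $\last{\p'{\pamove}}e$ via the witness $h$ where the paper routes through $\last{\p'}{f'}$), so no further comment is needed.
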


\begin{proof}
  If $\last \p f = \bot$ or $\last {\p'} f = \bot$, the proof of 1., 2.,
  and 3.\ is immediate.
  So we assume this is not the case, and we let $g = \last \p f \in E_p$ and
  $g' = \last {\p'} f \in E_p$.

  We first show that $\p \ple_f \p'$ iff $\fa {\p} {{\pamove}\p'} (f) \le f$.
  Indeed, if $\p \ple_f \p'$, then we have $g \pamove g'$ and
  $(g',f) \in \sem M {\p'}$, hence $(g,f) \in \sem M {{\pamove}\p'}$.
  Then, by definition,
  $\fa \p {{\pamove}\p'} (f) = \first {{\pamove}\p'} g \le f$.
  Conversely, if $\p' \pless_f \p$, i.e., $g' < g$, then by maximality of
  $g' = \last {\p'} {f}$, we have $(g,f) \notin \sem M {{\pamove}\pi'}$,
  hence $f < \first {{\pamove}\pi'} g = \fa \p {{\pamove}\p'} (f)$
  (either $\first {{\pamove}\pi'} g = \top$, or it is an event to the right
  of $f$).

  Similarly, we have $\p' \pless_f \p$ iff $\fa {\p'} {{\ppmove}\p} (f) \le f$.
  So, in all three statements, all that remains to be proved is the equality
  in the left-to-right implications:
    \begin{enumerate}
  \item Assume $f$ is $\prel$-minimal and $\p \ple_f \p'$.
    By the above, we have $\fa \p {{\pamove}\p'} (f) \le f$, and since $f$ is
    $\prel$-minimal, $\fa \p {{\pamove}\p'} (f) = f$.
  \item Assume $e \prel f$, $\p'{\pamove} \pless_e \p{\pamove}$, and
    $\p \ple_f \p'$. In particular, $f' \df \fa \p {{\pamove}\p'} (f)
    = \first{{\pamove}\p'}{g} \le f$.
    Now, suppose $f' < f$ and, therefore, $f' \le e$.
    Notice that $\last{\p'}{f'}\neq\botp{p}$ and $g\leq\last{\p'}{f'}$.  Using
    Lemma~\ref{lem:monotone} (monotonicity), we obtain the following
    contradiction:
    $$
    g\leq\last{\p'}{f'}=\last{\p'{\pamove}}{f'}\leq\last{\p'{\pamove}}{e}
    <\last{\p{\pamove}}{e}\leq\last{\p{\pamove}}{f}=\last{\p}{f}=g \,.
    $$
  \item Assume $e \prel f$, $\p{\pamove} \ple_e \p'{\pamove}$, and
    $\p' \pless_f \p$. In particular, $f'=\fa {\p'} {{\ppmove}\p} (f)\leq f$.
    Now, suppose $f' \le e$.
    Notice that $\last{\p}{f'}\neq\botp{p}$ and $g'<\last{\p}{f'}$.  Using
    Lemma~\ref{lem:monotone} (monotonicity), we obtain the following
    contradiction:
    $$
    g'<\last{\p}{f'}=\last{\p{\pamove}}{f'}\leq\last{\p{\pamove}}{e}
    \leq\last{\p'{\pamove}}{e}\leq\last{\p'{\pamove}}{f}=\last{\p'}{f}=g' \,.
    $$
  \end{enumerate}
  This concludes the proof.
\end{proof}

\section{Constructing the Gossip CFM}\label{sec:preorder-cfm}

In this section, we construct $\Ale$ computing the total preorders $\ple_e$
over a finite set of path expressions $\Pi$. We define the \emph{size} of
$\Pi$ as $\|\Pi\| = \sum_{\pi \in \Pi} |\pi|$, where $|\pi|$ denotes the length
of $\pi$.

\subsection{CFMs for $\fa \p {\p'}$}\label{sec:cfm-labels}

\begin{lemma}\label{lem:last-label}
  Let $\Theta$ be a finite set such that $\bot \notin \Theta$, and
  $\p \in \moves^\ast$ a path expression.
  There exists a \CFM with $|\Theta|^{\mathcal{O}(|\p|)}$ states
  recognizing the set of extended MSCs $(M,\xi)$ with
  $\xi \colon E \to \Theta \times (\Theta \cup \{\bot\})$ such that,
  for all events $e$, $\xi(e)$ is a pair $(\xia e,\xib e)$ such that
  $\xib e = \xia {\last \p e}$, with $\xia \bot = \bot$.
\end{lemma}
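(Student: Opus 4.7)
My plan is to prove the lemma by structural induction on the path expression $\p \in \moves^\ast$, building the CFM in layers so that each symbol of $\p$ contributes a multiplicative factor of at most $|\Theta|+1$ to the state space, which yields the claimed bound $|\Theta|^{\mathcal{O}(|\p|)}$.

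For the base cases ($|\p| \le 1$), each choice of $\p$ admits a direct CFM. When $\p \in \{\varepsilon, \pamove\}$, the pair $(e,e)$ always lies in $\sem{M}{\p}$, so the condition reduces to $\xib{e} = \xia{e}$. When $\p = a$ for some label $a$, one adds the guard $\lambda(e) = a$ and sets $\xib{e} = \bot$ otherwise. When $\p = \pmove$, each process stores the $\xi_1$-value of its previous event in a local register of $|\Theta|+1$ states. When $\p = \mmove{p}{q}$, process $p$ piggybacks $\xi_1$ onto each outgoing message (growing the message alphabet by a factor $|\Theta|+1$), and $q$ reads it off upon receipt.

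For the inductive step, write $\p = \p'\gamma$ with $\gamma \in \moves$ and apply the induction hypothesis to $\p'$, obtaining a CFM $\A'$ with $|\Theta|^{\mathcal{O}(|\p'|)}$ states that attaches to every event $e$ an auxiliary label $\theta(e) := \xia{\last{\p'}{e}}$. I then build the target CFM on top of $\A'$ by stacking a layer that consumes $\theta$ and produces the correct $\xib{e}$. For $\gamma \in \{\pmove, a, \mmove{p}{q}\}$, the preimage $\sem{M}{\gamma}^{-1}(e)$ is either empty or a singleton $\{g\}$, and one verifies $\last{\p'\gamma}{e} = \last{\p'}{g}$ (with the convention $\last{\p'}{\bot} = \bot$); the new layer is then exactly analogous to the corresponding base case but propagates $\theta$ rather than $\xi_1$, costing at most a factor $|\Theta|+1$ in states (or, for $\gamma = \mmove{p}{q}$, in the message alphabet).

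The subtle case is $\gamma = \pamove$, where $\sem{M}{\pamove}^{-1}(e)$ may contain many events on $\ploc(e)$. Here I appeal to Lemma~\ref{lem:monotone}: its part~3 gives $\last{\p'\pamove}{e} = \last{\p'}{e}$ whenever $\last{\p'}{e} \neq \bot$, and its part~1 ensures that $\lastf{\p'}$ is monotone along $\pamove$ on each process; together these imply that $\last{\p'\pamove}{e}$ equals $\last{\p'}{g^*}$, where $g^*$ is the latest event on $\ploc(e)$ with $g^* \pamove e$ and $\last{\p'}{g^*} \neq \bot$. The added layer then maintains, on each process, a single register of size $|\Theta|+1$ holding the most recent non-$\bot$ value of $\theta$ seen so far; updating this register with $\theta(e)$ at event $e$ and then reading it off yields $\xib{e}$. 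The main obstacle is exactly this case: although $\sem{M}{\pamove}^{-1}(e)$ is a priori unbounded, monotonicity collapses all the information needed to compute $\xib{e}$ to a single $\Theta \cup \{\bot\}$ entry per process, which is what keeps the construction within a finite-state CFM of the required size.
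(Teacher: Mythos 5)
Your proposal is correct and is essentially the paper's own construction: the paper maintains, in a single state function over all prefixes of $\p$, exactly the registers your layers compute (the value $\xia{\last{\p'}{e}}$ for each prefix $\p'$), with the same per-symbol update rules, including the same appeal to Lemma~\ref{lem:monotone} to handle the unbounded preimage under $\pamove$ by keeping only the most recent non-$\bot$ value. Unrolling your induction yields the paper's CFM, so the two arguments differ only in packaging.
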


\begin{proof}
  Let $\Pi = \{\p' \in \moves^\ast \mid \exists \p'' \in \moves^\ast\text{: }
  \p = \p'\p''\}$ be the set of prefixes of $\p$.
  The state of the \CFM taken at event $e$ will consist of a function
  $\theta(e) \colon \Pi \to \Theta \cup \{\bot\}$ such that, for all $e \in E$ and $\p' \in \Pi$,
  $\tht e {\p'} = \xia {\last {\p'} e}$.
  If $e$ is a send event, the function $\theta(e)$ is sent as a message.
  In order to determine $\tht e {\p_1}$ for all events $e$ and $\p_1 \in \Pi$,
  the \CFM only allows transitions ensuring the following:
  \begin{itemize}
  \item Suppose $\p_1 = \varepsilon$. Then, $\tht e {\p_1} = \xia e$.
  \item Suppose $\p_1 = \p_2\pmove$. If $e$ is $\prel$-minimal, then
    $\tht e {\p_1} = \bot$.
    If $f \prel e$ for some $f$, then $\tht e {\p_1} = \tht f {\p_2}$.

  \item Suppose $\p_1 = \p_2 {\pamove}$.
    If $\tht e {\p_2} \neq \bot$, then $\tht e {\p_1} = \tht e {\p_2}$
    (Lemma~\ref{lem:monotone}).
    If $\tht e {\p_2} = \bot$ and $e$ is $\prel$-minimal, then
    $\tht e {\p_1} = \bot$.
    If $\tht e {\p_2} = \bot$ and $f \prel e$ for some $f$, then
    $\tht e {\p_1} = \tht f {\p_1}$.

  \item Suppose $\p_1 = \p_2 \mmove p q$.  If $e \in E_q$ and there is an event
    $f \in E_p$ such that $f \mrel e$, then $\tht e {\p_1} = \tht f {\p_2}$.
    Otherwise, $\tht e {\p_1} =\bot$.

  \item Suppose $\p_1 = \p_2 \amove a$.  If $\lambda(e) = a$, then
    $\tht e {\p_1} = \tht e {\p_2}$.  Otherwise, $\tht e {\p_1} = \bot$.
  \end{itemize}
  Finally, the \CFM checks that, for all events $e$, $\xib e = \tht e \p$,
  i.e., $\xib e = \xia {\last \p e}$.
\end{proof}

We can prove a similar result for $\mathsf{succ}_\p$:

\begin{lemma}\label{lem:succ-label}
  Let $\Theta$ be a finite set such that $\top \notin \Theta$, and
  $\p \in \moves^\ast$ a path expression.
  There exists a \CFM with $|\Theta|^{\mathcal{O}(|\p|)}$ states
  recognizing the set of extended MSCs $(M,\xi)$ with
  $\xi \colon E \to \Theta \times (\Theta \cup \{\top\})$ such that,
  for all events $e$, $\xi(e)$ is a pair $(\xia e,\xib e)$ such that
  $\xib e = \xia {\first \p e}$, with $\xia \top = \top$.
\end{lemma}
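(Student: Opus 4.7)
The plan is to mirror the construction in Lemma~\ref{lem:last-label} by replacing the set of prefixes of $\p$ with its set of suffixes, and to use non-determinism to cope with the fact that $\firstf{\p}$ is future-directed rather than past-directed. Let $\Pi' = \{\p' \in \moves^\ast \mid \exists \p'' \in \moves^\ast\colon \p = \p''\p'\}$. The CFM guesses, at each event $e$, a function $\theta(e)\colon \Pi' \to \Theta \cup \{\top\}$ whose intended meaning is $\tht{e}{\p'} = \xia{\first{\p'}{e}}$ (with $\xia{\top} = \top$), stores $\theta(e)$ in the local state of $e$, piggy-backs $\theta(e)$ on the outgoing message whenever $e$ is a send event, and finally verifies $\xib{e} = \tht{e}{\p}$.

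The recurrences the CFM must enforce are the duals of those in Lemma~\ref{lem:last-label}. If $\p_1 = \varepsilon$, then $\tht{e}{\p_1} = \xia{e}$. If $\p_1 = \pmove\,\p_2$, then $\tht{e}{\p_1} = \tht{f}{\p_2}$ when $e$ has a $\prel$-successor $f$, and $\top$ otherwise. If $\p_1 = \pamove\,\p_2$, then by Lemma~\ref{lem:monotone}~item~4 we have $\tht{e}{\p_1} = \tht{e}{\p_2}$ whenever $\tht{e}{\p_2} \neq \top$; otherwise $\tht{e}{\p_1}$ equals $\tht{f}{\p_1}$ at the $\prel$-successor $f$ of $e$, or $\top$ if no such $f$ exists. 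If $\p_1 = \mmove{p}{q}\,\p_2$, then $\tht{e}{\p_1} = \tht{f}{\p_2}$ when $e \in E_p$ and $e \mrel f$ for some $f \in E_q$, and $\top$ otherwise. If $\p_1 = \amove{a}\,\p_2$, then $\tht{e}{\p_1} = \tht{e}{\p_2}$ when $\lambda(e) = a$, and $\top$ otherwise.

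The subtle point, and the reason for the non-determinism, is that these recurrences reference the $\prel$-successor of $e$ and the receiver of $e$'s message, both of which lie in the \emph{future} of $e$. A forward-running CFM cannot read those values at $e$ itself, so every such check is deferred to the point at which the referenced future event becomes available: $\prel$-related equalities are verified by the transition from $e$ to its successor $f$, which has both $\theta(e)$ and $\theta(f)$ at hand, while message-related equalities are verified at the matching receive event $f$, using the copy of $\theta(e)$ carried inside the message. The sentinel cases ``no $\prel$-successor'' and ``no matching send'' are enforced by demanding that the corresponding component of $\theta(e)$ equal $\top$ at the last event of each process, respectively at every event that is not a send to the required process.

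A straightforward induction on $|\p_1|$ then shows that, in any accepting run of the resulting CFM, $\tht{e}{\p_1} = \xia{\first{\p_1}{e}}$ for all $e$ and all $\p_1 \in \Pi'$, while the canonical choice of $\theta$ yields at least one accepting run on every extended MSC satisfying the desired property. Each state and each message consists of a function in $(\Theta \cup \{\top\})^{\Pi'}$ together with boundedly many extra bits, so the CFM has $|\Theta|^{\mathcal{O}(|\p|)}$ states. The main obstacle compared with Lemma~\ref{lem:last-label} is precisely this reversal of the information flow, and it is resolved by the guess-and-verify strategy described above.
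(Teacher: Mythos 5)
Your proposal is correct and follows essentially the same route as the paper: both work with the set of suffixes of $\p$, maintain a guessed function $\theta(e)$ from suffixes to $\Theta\cup\{\top\}$ satisfying the dual recurrences of Lemma~\ref{lem:last-label}, piggy-back $\theta(e)$ on outgoing messages, and finally check $\xib e = \tht e \p$. Your write-up is in fact slightly more explicit than the paper's about where each future-referencing constraint is verified (at the $\prel$-successor transition, respectively at the matching receive) and about the role of non-determinism, but this is exposition rather than a different argument.
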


\begin{proof}
  Let $\Pi = \{\p'' \in \moves^\ast \mid \exists \p' \in \moves^\ast\text{: }
  \p = \p'\p''\}$ be the set of suffixes of $\p$.
  The state of the \CFM taken at event $e$ will consist of a
  function $\theta(e) \colon \Pi \to
  \Theta \cup \{\top\}$ such that, for all $e \in E$ and $\p' \in \Pi$,
  $\tht e {\p'} = \xia {\first {\p'} e}$.
  If $e$ is a send event, the function $\theta(e)$ is sent as a message.
  In order to determine $\tht e {\p_1}$ for all events $e$ and $\p_1 \in \Pi$,
  the \CFM only allows transitions ensuring the following:
  \begin{itemize}
  \item Suppose $\p_1 = \varepsilon$. Then, $\tht e {\p_1} = \xia e$.
  \item Suppose $\p_1 = \pmove\p_2$. If $e$ is $\prel$-maximal, then
    $\tht e {\p_1} = \top$.
    If $e \prel f$ for some $f$, then $\tht e {\p_1} = \tht f {\p_2}$.

  \item Suppose $\p_1 = {\pamove} \p_2$.
    If $\tht e {\p_2} \neq \top$, then $\tht e {\p_1} = \tht e {\p_2}$
    (Lemma~\ref{lem:monotone}).
    If $\tht e {\p_2} = \top$ and $e$ is $\prel$-maximal, then
    $\tht e {\p_1} = \top$.
    If $\tht e {\p_2} = \top$ and $e \prel f$ for some $f$, then
    $\tht e {\p_1} = \tht f {\p_1}$.

  \item Suppose $\p_1 = \mmove p q \p_2$.  If $e \in E_p$ and there is an event
    $f \in E_q$ such that $e \mrel f$, then $\tht e {\p_1} = \tht f {\p_2}$.
    Otherwise, $\tht e {\p_1} =\top$.

  \item Suppose $\p_1 = \amove a \p_2$.  If $\lambda(e) = a$, then
    $\tht e {\p_1} = \tht e {\p_2}$.  Otherwise, $\tht e {\p_1} = \top$.
  \end{itemize}
  Finally, the \CFM checks that, for all events $e$, $\xib e = \tht e \p$,
  i.e., $\xib e = \xia {\first \p e}$.
\end{proof}

As a corollary, we obtain a \CFM for $\fa \p {\p'}$:

\begin{lemma}
  \label{lem:fa-label}
  Let $\Theta$ be a finite set such that $\Theta \cap \{\bot,\top\} =
  \emptyset$, $p,q \in \Procs$, and $\p,\p' \in \Paths_{p,q}$.
  There exists a \CFM
  with $|\Theta|^{\mathcal{O}(|\pi|+|\pi'|)}$ states
  recognizing the set of extended MSCs $(M,\xi)$ with 
  $\xi\colon E \to \Theta \times (\Theta \cup \{\bot,\top\})$ such that, for all
  events~$e \in E_q$, $\xi(e)$ is a pair $(\xia e, \xib e)$ such that
  $\xib e = \xia {\fa \p {\p'} (e)}$.
\end{lemma}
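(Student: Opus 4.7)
The plan is to build the CFM by composing the two preceding lemmas, exploiting the factorization
\[
\fa{\p}{\p'}(e) \;=\; \first{\p'}{\last{\p}{e}}\,.
\]
Intuitively, I first compute at every event $f$ the value $\xia{\first{\p'}{f}}$ by a ``succ-style'' automaton along $\p'$, and then propagate this quantity back along $\p$ via a ``pred-style'' automaton.

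Concretely, I would first invoke Lemma~\ref{lem:succ-label} with path expression $\p'$ and label alphabet $\Theta$ to obtain a CFM $\mathcal{B}_1$ of size $|\Theta|^{\mathcal{O}(|\p'|)}$ which, given the input labeling $\xi_1 \colon E \to \Theta$ together with a nondeterministically guessed labeling $u \colon E \to \Theta \cup \{\top\}$, accepts iff $u(f) = \xia{\first{\p'}{f}}$ for every event $f$ (with the convention $\xia{\top} = \top$). Next I would apply Lemma~\ref{lem:last-label} with path expression $\p$ and label alphabet $\Theta' \df \Theta \cup \{\top\}$; this is permissible because $\bot \notin \Theta'$, which follows from the hypothesis $\Theta \cap \{\bot,\top\} = \emptyset$. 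This yields a CFM $\mathcal{B}_2$ of size $(|\Theta|+1)^{\mathcal{O}(|\p|)} = |\Theta|^{\mathcal{O}(|\p|)}$ that, given the labeling $u$ reused from the previous step and a further guessed labeling $w \colon E \to \Theta' \cup \{\bot\}$, accepts iff $w(e) = u(\last{\p}{e})$ for every event $e$, with the convention $u(\bot) = \bot$.

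The target CFM is the synchronized product of $\mathcal{B}_1$ and $\mathcal{B}_2$, sharing the guessed labeling $u$, together with a local check at every $e \in E_q$ that $\xib e = w(e)$ (no constraint is enforced on $\xi_2$ elsewhere). Correctness unfolds directly from the definitions: if $\last{\p}{e} = g \in E_p$ and $\first{\p'}{g} = f \in E_q$, then $w(e) = u(g) = \xia{f} = \xia{\fa{\p}{\p'}(e)}$; and the two degenerate cases $\last{\p}{e} = \bot$ and $\first{\p'}{\last{\p}{e}} = \top$ are absorbed by the $\bot$/$\top$ conventions used in the two lemmas. The total number of states is $|\Theta|^{\mathcal{O}(|\p'|)} \cdot |\Theta|^{\mathcal{O}(|\p|)} = |\Theta|^{\mathcal{O}(|\p|+|\p'|)}$, matching the claimed bound.

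The only real obstacle is bookkeeping: one must check that the output alphabet of Lemma~\ref{lem:succ-label}, which ranges over $\Theta \cup \{\top\}$, is a legitimate input alphabet for Lemma~\ref{lem:last-label}. The assumption $\Theta \cap \{\bot,\top\} = \emptyset$ in the statement is exactly what makes this chaining legal, so no new ideas beyond the two preceding lemmas are needed.
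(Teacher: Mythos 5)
Your proof is correct and matches the paper's intent exactly: the paper states this lemma as an immediate corollary of Lemmas~\ref{lem:last-label} and~\ref{lem:succ-label} without spelling out the composition, and your construction (guess the intermediate labeling $u(f)=\xia{\first{\p'}{f}}$, feed it as the input alphabet $\Theta\cup\{\top\}$ into the pred-automaton for $\p$, take the product and project) is precisely that composition, with the $\bot$/$\top$ conventions and the state bound handled correctly.
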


We are now ready to prove that there exists a \CFM $\Afa \p {\p'}$
that determines, for each event $e$, whether $\fa \p {\p'} (e) = e$.

\begin{lemma}
  \label{lem:Afa}
  Let $\p,\p' \in \Paths_{p,q}$ with $p,q\in\Procs$.
  There exists a \CFM $\Afa \p {\p'}$ over $P$ and $\Sigma\times\{0,1\}$ 
  with $2^{\mathcal{O}(|\pi|+|\pi'|)}$ states
  that recognizes the set of MSCs $(M,\gamma)$ such that, for all events $e$
  on process $q$, we have $\gamma(e) = 1$ iff $\fa \p {\p'} (e) = e$.
\end{lemma}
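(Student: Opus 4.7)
I would build $\Afa{\p}{\p'}$ by combining Lemma~\ref{lem:fa-label} with nondeterministic guessing, crucially exploiting the monotonicity of $\fa{\p}{\p'}$ established in Lemma~\ref{lem:monotone2}.

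The main structural observation is that the (undirected) functional graph $G$ of $\fa{\p}{\p'}$ on $E_q$ --- with an edge $\{e, \fa{\p}{\p'}(e)\}$ whenever $\fa{\p}{\p'}(e) \in E_q$ and $\fa{\p}{\p'}(e) \neq e$ --- is acyclic. A directed cycle $e_0 \mapsto e_1 \mapsto \cdots \mapsto e_{k-1} \mapsto e_0$ of distinct events would contradict monotonicity after sorting the $e_i$ along the $\prel$-order on $E_q$; since $\fa{\p}{\p'}$ is a function of out-degree at most one, undirected cycles reduce to directed ones. Hence $G$ is a forest, and so admits a proper $2$-coloring $\xi_1\colon E_q \to \{0,1\}$ with $\xi_1(e) \neq \xi_1(\fa{\p}{\p'}(e))$ whenever $\fa{\p}{\p'}(e) \in E_q \setminus \{e\}$.

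The CFM $\Afa{\p}{\p'}$ then nondeterministically guesses the output bit $\gamma(e) \in \{0,1\}$ and a labeling $\xi_1\colon E \to \{0,1\}$, extended arbitrarily off $E_q$. It invokes the CFM from Lemma~\ref{lem:fa-label} with input $\xi_1$ and $\Theta = \{0,1\}$, producing the companion labeling $\xi_2$ with $\xi_2(e) = \xi_1(\fa{\p}{\p'}(e)) \in \{0,1,\bot,\top\}$ at every $e \in E_q$; this already costs $2^{\mathcal{O}(|\p|+|\p'|)}$ states. The local acceptance condition is $\gamma(e) = 1 \iff \xi_1(e) = \xi_2(e) \in \{0,1\}$. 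When $\xi_1$ is a faithful $2$-coloring of $G$, this equivalence is precisely $\fa{\p}{\p'}(e) = e$, so every correctly labelled MSC admits an accepting run.

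The hard part will be ruling out accepting runs that rely on an \emph{unfaithful} coloring $\xi_1$, since an accidental match $\xi_1(e) = \xi_2(e)$ at a non-fixed event $e$ could let the machine accept an incorrect $\gamma$. My plan is to enrich the guess with additional constant-alphabet data at each $e \in E_q$, in particular the relative position of $\fa{\p}{\p'}(e)$ with respect to $e$ (\emph{before}, \emph{equal}, \emph{after}, $\bot$, or $\top$). Monotonicity (Lemma~\ref{lem:monotone2}) on $\prel$-consecutive events of $q$ together with the values of $\xi_2$ then forces this data to be compatible with a genuine $2$-coloring of the forest, eliminating the bogus runs. All auxiliary alphabets have constant size, so the overall state count remains $2^{\mathcal{O}(|\p|+|\p'|)}$.
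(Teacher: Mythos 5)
Your completeness half is exactly right and matches the paper: by Lemma~\ref{lem:monotone2} the functional graph of $\fa \p {\p'}$ on $E_q$ has no cycles other than self-loops, so a proper $2$-coloring of the non-fixed events exists, and Lemma~\ref{lem:fa-label} lets the CFM compare $\xi_1(e)$ with $\xi_1(\fa \p {\p'}(e))$. The gap is in the soundness half, and the fix you sketch does not close it. The extra datum you propose to guess --- the relative position of $\fa \p {\p'}(e)$ with respect to $e$ --- is itself unverifiable by the CFM: deciding whether that position is ``equal'' is precisely the problem the lemma is solving, and Lemma~\ref{lem:fa-label} only delivers the \emph{label} $\xi_1(\fa \p {\p'}(e))$ at $e$, with no positional information attached. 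Monotonicity only yields consistency constraints between the guessed directions at $\prel$-consecutive events of $q$ (e.g., once ``after'', always ``after or equal''), and these are satisfied by globally wrong guesses: take $\xi_1 \equiv 0$, guess direction ``equal'' and $\gamma(e)=1$ at every $e$ with $\xi_2(e) \in \{0,1\}$; then $\xi_1(e)=\xi_2(e)$ everywhere the check applies, all your local conditions hold, and the machine accepts a $\gamma$ that wrongly marks every $e$ with $\fa \p {\p'}(e) \in E_q \setminus \{e\}$.

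The paper's mechanism for soundness is genuinely different and is the missing idea. It uses \emph{four} colors, two reserved for events claimed to be fixed points and two for the others, and adds the locally checkable requirement that the claimed fixed points \emph{alternate} between their two dedicated colors along process $q$; the acceptance condition is that $\gamma(e)=1$ iff $\zeta(e)=\zeta(\fa \p {\p'}(e))$. Soundness is then a global combinatorial argument rather than a local verification of the coloring: if some claimed fixed point $g_0$ has $g_1 := \fa \p {\p'}(g_0) \neq g_0$, the color equality forces $g_1$ to be another claimed fixed point of the \emph{same} color, the alternation provides a claimed fixed point $h_0$ strictly between $g_0$ and $g_1$ of the opposite color, and iterating $\fa \p {\p'}$ from both and applying Lemma~\ref{lem:monotone2} produces an infinite strictly increasing interleaved sequence $g_0 < h_0 < g_1 < h_1 < \cdots$, contradicting finiteness of the MSC. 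You need this alternation-plus-interleaving argument (or some equivalent globally sound certificate); guessed positional bits checked only against monotonicity will not do.
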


\begin{proof}
We denote by $L$ the set of MSCs $(M,\gamma)$ such that, for all events
$e$ on process $q$, $\gamma(e) = 1$ iff $\fa \p {\p'} (e) = e$.
To ensure that the input MSC is in $L$, the \CFM
$\Afa \p {\p'}$ will use a coloring of the events of process $q$, constructed
in such a way that, for all events $e$ on process~$q$, the events $e$ and
$\fa {\p} {\p'} (e)$ have the same color iff they are equal.

Formally, we consider doubly extended MSCs $(M,\gamma,\zeta)$ with $\gamma
\colon E \to \{0,1\}$ and $\zeta \colon E \to \{\colone,\coltwo,\zcolone,
\zcoltwo\}$. As usual, we define $\zeta(\bot) = \bot$ and $\zeta(\top) = \top$.
Let $\tilde{L}$ be the set of MSCs $(M,\gamma,\zeta)$ such that the following hold:
\begin{enumerate}
  \item Denoting by $e_1 < e_2 < \cdots < e_k$ the events on process $q$
    with $\gamma(e_i) = 1$, we have $\zeta(e_i) = \colone$ if $i$ is
    odd, $\zeta(e_i) = \coltwo$ if $i$ is even, and $\zeta(e) \in
    \{\zcolone,\zcoltwo\}$ if $e \in E_q\setminus \{e_1, \ldots, e_k\}$.

    Intuitively, $\zeta(e)$ will be a color computed (if $\gamma(e) = 1$)
    or guessed (if $\gamma(e) = 0$) by $\Afa \p {\p'}$.

  \item For all $e \in E_q$, $\gamma(e) = 1$ iff
    $\zeta(e) = \zeta(\fa \p {\p'} (e))$.
\end{enumerate}
We first show that there exists a \CFM accepting $\tilde{L}$.  First, applying
Lemma~\ref{lem:fa-label} with $\Theta = \{\colone,\coltwo,\zcolone,\zcoltwo\}$,
we know that there exists a \CFM accepting the set of MSCs $(M,\gamma,\xi)$
with $\xi \colon E \to \Theta \times (\Theta \cup \{\bot,\top\})$ such that,
for all events $e$, $\xi(e) = (\xia e, \xib e) =
(\xia e, \xia {\fa \p {\p'} (e)}$.
We then restrict the transitions of this \CFM so that it additionally
checks that, for all events $e$ on process $q$, $\gamma(e) = 1$ iff
$\xia e = \xib e$.
By projection onto the first component of $\xi$, we obtain a \CFM 
accepting $\tilde{L}$.

We define $\Afa \p {\p'}$ as the \CFM recognizing the projection of $\tilde{L}$
on $\Sigma \times \{0,1\}$. We claim that $L(\Afa \p {\p'}) = L$.

We first prove the left-to-right inclusion.
Suppose $(M,\gamma,\zeta) \in \tilde{L}$, with $e_1,\ldots,e_k$ defined 
as above. Towards a contradiction, assume $(M,\gamma) \not\in L$.
For all events $e\in E_q\setminus\{e_1,\ldots,e_k\}$, we have
$\zeta(e)\neq\zeta(\fa \p {\p'} (e))$, hence $\fa \p {\p'} (e) \neq e$.
So there exists $g_0 \in \{e_1,\ldots,e_k\}$ such that $g_0 \neq \fa \p {\p'} (g_0)$.
For all $i \in \mathbb{N}$, let $g_{i+1} = \fa \p {\p'} (g_i)$. 
Note that $g_i\in\{e_1,\ldots,e_k\}$ implies that $\fa \p {\p'} (g_i)\in E_q$
and $\zeta(g_{i+1})=\zeta(g_{i})\in\{\colone,\coltwo\}$, hence 
$g_{i+1}\in\{e_1,\ldots,e_k\}$.  Suppose $g_0 < g_1$
(the case $g_1 < g_0$ is similar).  Take $g_0 < h_0 < g_1$ such that $\zeta(h_0)
\in \{\colone,\coltwo\}$ and $\zeta(h_0) \neq \zeta(g_0)$.
Again, for all $i \in \mathbb{N}$, let $h_{i+1} = \fa \p {\p'} (h_i)$.
Note that all $g_0,g_1,\ldots$ have the same color, and all $h_0,h_1,\ldots$
carry the complementary color.  Thus, $g_i \neq h_i$ for all $i \in \mathbb{N}$.
But, by Lemma~\ref{lem:monotone2}, this implies $g_0 < h_0 < g_1 < h_1 < \ldots$
which contradicts the fact that we deal with finite MSCs.

Next, we show that $L \subseteq L(\Afa \p {\p'})$.
Suppose $(M,\gamma) \in L$. Let $E_0 = \{e \in E_q \mid \gamma(e) = 0\}
= \{e \in E_q \mid \fa \p {\p'} (e) \neq e\}$ and $E_1 = \{e \in E_q \mid
\gamma(e) = 1\} = \{e \in E_q \mid \fa \p {\p'} (e) = e\}$.
Consider the graph $G = (E_q,\{(e,\fa \p {\p'} (e)) \mid e \in E_q 
\wedge \fa \p {\p'} (e)\in E_q\})$.
Every vertex has outdegree at most 1, and, since $\fa \p {\p'}$ is
monotone, there are no cycles except for self-loops.
So the restriction of $G$ to $E_0$ is a forest,
and there exists a $2$-coloring $\chi\colon E_0 \to \{\zcolone,\zcoltwo\}$ such
that, for all $e\in E_0$ with $\fa \p {\p'} (e)\in E_0$, we have
$\fa \p {\p'} (e) \in \{\bot,\top\}$ or $\chi(e) \neq \chi(\fa \p {\p'} (e))$.
Define $\zeta \colon E \to \{\colone,\coltwo,\zcolone,\zcoltwo\}$ by
$\zeta(e) = \chi(e)$ for $e \in E_0$ and as in Condition 1.\
for $e \in E_1$. Notice that Condition 2.\ is satisfied. Hence,
$(M,\gamma,\zeta) \in \tilde{L}$ and $(M,\gamma) \in L(\Afa \p {\p'})$.
\end{proof}

\subsection{The Gossip CFM}\label{sec:gossip-cfm}

Let $p,q \in \Procs$ and $\Paths$ be a finite subset of $\Paths_{p,q}$.
We are now in a positon to build a (non-deterministic) \CFM that outputs,
at every event $e \in E_q$,
the restriction of ${\ple_e}$ to $\Paths \times \Paths$.

\begin{lemma}
  \label{lem:ord}
  Let $\Ord$ be the set of preorders over $\Paths$.
  There exists a \CFM $\Ale$ over $P$ and $\Sigma \times \Ord$
  with $2^{\mathcal{O}(\|\Pi\|^2)}$ states
  that recognizes the set of MSCs $(M,\gamma)$ such that $\gamma(e) = {\ple_e}$.
\end{lemma}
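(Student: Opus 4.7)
The plan is to build $\Ale$ inductively along process $q$, using Lemma~\ref{ple-char} to propagate the preorder from one event on $q$ to the next. The state of $q$ at event $e$ will store (as its observable label) the preorder $\ple_e$, together with some auxiliary Boolean information; these updates will be driven by bits produced by a product of smaller CFMs built from Lemmas~\ref{lem:last-label} and \ref{lem:Afa}.

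First, I would construct the auxiliary labels. Enlarge $\Pi$ to $\hat\Pi = \Pi \cup \{\p{\pamove} \mid \p \in \Pi\}$; since Lemma~\ref{lem:monotone}.3 implies that $\p{\pamove\pamove}$ and $\p{\pamove}$ have the same $\lastf{\cdot}$ function, we do not need to iterate this closure. For every ordered pair $(\p,\p') \in \hat\Pi \times \hat\Pi$, I apply Lemma~\ref{lem:Afa} twice to obtain CFMs tagging each event $e \in E_q$ with the Boolean value of ``$\fa{\p}{{\pamove}\p'}(e) = e$'' and of ``$\fa{\p'}{{\ppmove}\p}(e) = e$''. Moreover, for every $\p \in \hat\Pi$, Lemma~\ref{lem:last-label} yields a CFM tagging each event with whether $\last{\p}{e} = \botp{p}$. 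Taking the product of all these CFMs gives state space of size $\prod_{(\p,\p')} 2^{\mathcal{O}(|\p|+|\p'|)} = 2^{\mathcal{O}(\sum_{(\p,\p')}(|\p|+|\p'|))} = 2^{\mathcal{O}(|\hat\Pi|\cdot\|\hat\Pi\|)} = 2^{\mathcal{O}(\|\Pi\|^2)}$.

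On top of this product, I would add a component living on process $q$ whose state records the current preorder $\ple_e$ on $\hat\Pi$. The transitions are driven by Lemma~\ref{ple-char}: at the first event $f$ of $q$, the relation $\p \ple_f \p'$ is recomputed from the auxiliary bits according to clause~\ref{lem:initle} of Lemma~\ref{ple-char} (reading ``$\last{\p}{f} = \botp{p}$'' or ``$\fa{\p}{{\pamove}\p'}(f) = f$''); at any later event $f$ with $e \prel f$, the stored preorder $\ple_e$ on $\hat\Pi$ tells us, for each $(\p,\p')$, whether $\p{\pamove} \ple_e \p'{\pamove}$, and we apply clause~\ref{lem:gtle} or clause~\ref{lem:legt} of Lemma~\ref{ple-char} accordingly, using the auxiliary bits at $f$ to read off $\p \ple_f \p'$. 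The emitted observable label at $e$ is the restriction of $\ple_e$ to $\Pi \times \Pi$. Since a preorder on $\hat\Pi$ fits in $2^{\mathcal{O}(|\hat\Pi|^2)} = 2^{\mathcal{O}(\|\Pi\|^2)}$ states, the total state space of $\Ale$ remains $2^{\mathcal{O}(\|\Pi\|^2)}$.

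The main obstacle is precisely the need for the auxiliary information $\p{\pamove} \ple_e \p'{\pamove}$ when applying Lemma~\ref{ple-char} at an event $f$ with predecessor $e$, since $\p{\pamove}$ is not in $\Pi$ in general; this is what forces the passage to $\hat\Pi$. Once $\hat\Pi$ is in place and one observes that $\|\hat\Pi\| \le 2\|\Pi\|$ so that the size bound is preserved, correctness of the construction is a straightforward induction along $E_q$: the base case is clause~\ref{lem:initle} and the inductive step is the combination of clauses~\ref{lem:gtle} and~\ref{lem:legt}, both of which cover the whole state space since either $\p{\pamove} \ple_e \p'{\pamove}$ or $\p'{\pamove} \pless_e \p{\pamove}$ always holds.
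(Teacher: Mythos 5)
Your proposal is correct and follows essentially the same route as the paper: close $\Pi$ under appending ${\pamove}$, use Lemma~\ref{lem:Afa} to tag each event with the bits $\fa{\p}{{\pamove}\p'}(e)=e$ and $\fa{\p'}{{\ppmove}\p}(e)=e$, and propagate the preorder along $E_q$ via the three clauses of Lemma~\ref{ple-char}, with the same $2^{\mathcal{O}(\|\Pi\|^2)}$ size accounting. The only (harmless) difference is that you make explicit the use of Lemma~\ref{lem:last-label} for the tests $\last{\p}{f}=\botp{p}$, which the paper leaves implicit.
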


\begin{proof}
  Without loss of generality, we can assume that, for all $\p \in \Paths$,
  we have $\p{\pamove} \in \Paths$ or $\p = \p'{\pamove}$ for some
  $\p' \in \Gamma^\ast$.
  In addition, we will identify path expressions $\p{\pamove}{\pamove}$
  and $\p{\pamove}$, observing that we have
  $\sem M {\p{\pamove}{\pamove}} = \sem M {\p{\pamove}}$.
  With this convention, we can always assume that, if $\p \in \Paths$, then
  $\p{\pamove} \in \Paths$, while keeping $\Paths$ finite (and of linear size).
  
  By Lemma~\ref{lem:Afa} (and since $\Paths$ is finite), $\Ale$ can determine,
  for each event $e$ and all path expressions $\p,\p' \in \Paths \cup
  \{{\pamove}\p \mid \p \in \Paths\} \cup \{{\ppmove}{\p} \mid \p \in \Paths\}$,
  whether $\fa \p {\p'} (e) = e$.
  The \CFM then checks that, for all $f \in E_q$ and $\p,\p' \in \Paths$,
  $(\p,\p') \in \gamma(f)$ iff one of the following holds
  (cf.\ Lemma~\ref{ple-char}):
  \begin{itemize}
  \item $f$ is minimal on process $q$, and $\last \p f = \botp p$
    or $\fa \p {{\pamove}\p'} (f) = f$.
  \item $e \prel f$, $(\p{\pamove},\p'{\pamove}) \notin \gamma(e)$,
    and $\last \p f = \bot$ or $\fa \p {{\pamove}\p'} (f) = f$.
  \item $e \prel f$, $(\p{\pamove},\p'{\pamove}) \in \gamma(e)$,
    $\fa {\p'} {{\ppmove}\p} (f) \neq f$,
    and $\last \p f = \botp p$ or $\last {\p'} f \neq \botp p$.
    \qedhere
  \end{itemize}
\end{proof}

In fact, for the gossip problem, one needs only a particular set of path
expressions.  For a sequence $w = p_1 \ldots p_n \in \Procs^+$ of pairwise
distinct processes, we define the path expression $\pexpr{w}$ by
$\pexpr{w}={\ppmove}$ if $n=1$, and
$\pexpr{w}={\pamove}{\mmove{p_1}{p_2}}{\pamove}{\mmove{p_2}{p_3}} \ldots
{\pamove}{\mmove{p_{n-1}}{p_n}}{\pamove}$ if $n\geq2$.  Let $\SimplePaths$ be
the set of all those path expressions (which is finite).  Finally, given
processes $p,q \in \Procs$, we define $\SimplePaths_{p,q} =
\Paths_{p,q}\cap\SimplePaths$.
We have ${<} = \bigcup_{\p \in \SimplePaths} \sem M {\p}$.
Moreover, for all $e \in E_q$, $\lastp p e = \max \{ \last \p e \mid \p \in
\SimplePaths_{p,q} \}$.

We can now apply Lemma~\ref{lem:ord} to all sets $\SimplePaths_{p,q}$ to obtain the desired gossip \CFM $\Agossip$:

\begin{theorem}
  \label{gossipcfm}
  There exists a CFM $\Agossip$ with $|\Sigma|^{2^{\mathcal{O}(|P| \log |P|)}}$ states
  that recognizes the set of extended MSCs
  $((E,\prel,\mrel,\ploc,\lambda),\xi)$ such that, for all events $e \in E$,
  $\xi(e)$ is the function from $\Procs$ to $\Sigma \cup \{\bot\}$ defined by
  $\xi(e)(p) = \lambda(\lastp p e)$. 
\end{theorem}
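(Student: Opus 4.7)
The plan is to combine Lemmas~\ref{lem:ord} and~\ref{lem:last-label} via the identity $\lastp p e = \max\{\last \p e \mid \p \in \SimplePaths_{p,q}\}$ recalled just before the theorem. For each pair $(p,q) \in \Procs \times \Procs$, the set $\SimplePaths_{p,q}$ is finite and consists of path expressions of length $\mathcal{O}(|\Procs|)$. First, I apply Lemma~\ref{lem:ord} to $\SimplePaths_{p,q}$ to obtain a CFM that annotates each event $e \in E_q$ with the restriction of $\ple_e$ to $\SimplePaths_{p,q}$; from this preorder, some $\ple_e$-maximal element $\p^{\star}_e \in \SimplePaths_{p,q}$ can be singled out, and by the identity above it satisfies $\last {\p^{\star}_e} e = \lastp p e$.

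Next, for each $\p \in \SimplePaths_{p,q}$, Lemma~\ref{lem:last-label} with $\Theta = \Sigma$ yields a CFM that annotates every event $e$ with $\lambda(\last \p e)$. Taking the product of these CFMs together with the preorder CFM of the previous paragraph, and reading off the label attached to any $\p^{\star}_e$, gives $\lambda(\lastp p e)$, which is exactly the value that $\xi(e)(p)$ must have. If the $\ple_e$-maximal path expressions all satisfy $\last \p e = \bot$, one sets $\xi(e)(p) = \bot$ instead.

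The final automaton $\Agossip$ is obtained by taking the product of these CFMs over all pairs $(p,q) \in \Procs \times \Procs$ (the case $p = q$ is covered uniformly since $\SimplePaths_{p,p} = \{\ppmove\}$). The step that needs most care is noticing that the guess of $\p^{\star}_e$ is made \emph{independently} at each event, which is harmless: any two $\ple_e$-maximal expressions share the same value $\last \p e = \lastp p e$ and therefore the same label, so the output produced is deterministic as a function of the MSC even though the run is non-deterministic.

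The remaining work is a routine state-count estimation, which I expect to be the only technically fiddly part. The number of simple paths is bounded by $|\Procs|!$ and each has length $\mathcal{O}(|\Procs|)$, so $\|\SimplePaths_{p,q}\| = 2^{\mathcal{O}(|\Procs| \log |\Procs|)}$; Lemma~\ref{lem:ord} then contributes $2^{2^{\mathcal{O}(|\Procs| \log |\Procs|)}}$ states, and the label-tracking CFMs from Lemma~\ref{lem:last-label} contribute $|\Sigma|^{2^{\mathcal{O}(|\Procs| \log |\Procs|)}}$ states. Taking the product over all $|\Procs|^2$ pairs preserves the form of the bound and yields the announced $|\Sigma|^{2^{\mathcal{O}(|\Procs| \log |\Procs|)}}$.
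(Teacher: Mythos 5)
Your proposal is correct and follows essentially the same route as the paper: both combine Lemma~\ref{lem:ord} (to identify, at each event, a $\ple_e$-maximal path expression in $\SimplePaths_{p,q}$) with Lemma~\ref{lem:last-label} (to read off $\lambda(\last \p e)$), using the identity $\lastp p e = \max\{\last \p e \mid \p \in \SimplePaths_{p,q}\}$. The extra details you supply --- the product over all pairs $(p,q)$, the well-definedness of the output despite the non-deterministic choice of maximal expression, and the state-count estimate --- are consistent with the paper's (terser) argument.
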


\begin{proof}
  The \CFM $\Agossip$ guesses, for all $e \in E_q$, some $\p \in \SimplePaths_{p,q}$.
  Using Lemma~\ref{lem:ord}, it verifies $\lastp p e = \last \p e$.
  Moreover, using Lemma~\ref{lem:last-label}, it checks that $\xi(e) = \lambda(\last \p e)$.
\end{proof}

Next, we show that $\Agossip$ is, unavoidably, non-deterministic.
Following \cite{HenriksenJournal,GKM07,Kuske01}, we call a \CFM $\C = ((\A_p)_{p \in
\Procs},\Msg,\Acc)$ \emph{deterministic} if, for all processes $p$ and
transitions $t_1=(s_1,\act_1,s_1')$ and $t_2=(s_2,\act_2,s_2')$ of $\A_p$ such that $s_1 = s_2$ and
$\tlabel(t_1) = \tlabel(t_2)$, the following hold:
\begin{itemize}
\item If $t_1$ and $t_2$ are internal transitions, then $s_1' = s_2'$.
\item If $t_1$ and $t_2$ are send transitions such that $\receiver(t_1) = \receiver(t_2)$, then $s_1' = s_2'$ and $\tmsg(t_1) = \tmsg(t_2)$.
\item If $t_1$ and $t_2$ are receive transitions such that $\sender(t_1) = \sender(t_2)$ and $\tmsg(t_1) = \tmsg(t_2)$, then $s_1' = s_2'$.
\end{itemize}

\begin{proposition}
  There is no deterministic gossip \CFM for $|\Sigma| \ge 2$ and $|P| \ge 3$.
\end{proposition}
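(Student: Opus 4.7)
My plan is to proceed by contradiction using a pigeonhole argument on the finite state space of $\Agossip$. Fix distinct processes $p, q, r \in P$ and distinct labels $a, b \in \Sigma$.

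I would first define a parametric family of MSCs $M_n$ (for $n \ge 1$) over $\{p, q, r\}$. In $M_n$, process $p$ performs $2n$ sends alternating between receiver $q$ with label $a$ and receiver $r$ with label $b$; process $r$ receives the $n$ messages from $p$ and then forwards each to $q$; and process $q$'s local order is to receive first the $n$ direct messages from $p$, then the $n$ forwards from $r$. A direct computation of causal pasts shows that, at the $k$-th forward $g_k$ received by $q$, the correct gossip value $\xi(g_k)(p)$ equals $a$ (label of $p$'s last $q$-destined send, which is $p$'s $(2n{-}1)$-th event) for every $k < n$, and equals $b$ (label of $p$'s last $r$-destined send, which is its $(2n)$-th event) for $k = n$.

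Because $\Agossip$ is deterministic over a finite state set and a finite message alphabet, the state trajectory of each process during the unique accepting run of $\Agossip$ on $(M_n, \xi_n^\ast)$ is eventually periodic in $n$. The pigeonhole step is to pick $n_1 < n_2$, both in the periodic tail, such that: (a) the joint local state of $(p, q, r)$ immediately after $q$ has received all its direct messages coincides in the runs on $M_{n_1}$ and $M_{n_2}$; (b) the forward messages produced by $r$ agree position-by-position for the first $n_1$ forwards; and (c) the extra $n_2 - n_1$ intermediate forwards in $M_{n_2}$ return the joint final state of $(p, q, r)$ to the same tuple reached at the end of the run on $M_{n_1}$. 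All three conditions can be enforced simultaneously by choosing $n_2 - n_1$ to be a common multiple of the periods of the three state trajectories involved.

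The contradiction then comes from a local swap. Define an incorrect annotation $\xi'$ for $M_{n_1}$ which coincides with $\xi_{n_1}^\ast$ everywhere except at the last event, where $\xi'(g_{n_1}) = a$ instead of $b$. By correctness, $\Agossip$ must reject $(M_{n_1}, \xi')$. Yet, at $g_{n_1}$, the unique run of $\Agossip$ on $(M_{n_1}, \xi')$ takes exactly the same transition as the run of $\Agossip$ on $(M_{n_2}, \xi_{n_2}^\ast)$ takes at its own $n_1$-th forward: same $q$-state by (a)--(b), same $r$-forward by (b), and same $\xi$-label $a$. By the alignment of continuation states in (c), both runs end in the same global state tuple, which must lie in $\Acc$ because $(M_{n_2}, \xi_{n_2}^\ast)$ is accepted. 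Hence $(M_{n_1}, \xi')$ is also accepted, contradicting the rejection required for an incorrect gossip annotation.

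The main obstacle is the multi-dimensional pigeonhole in the alignment step: one must simultaneously align the eventually-periodic trajectories of $p$, $q$, and $r$ and ensure that the $n_2 - n_1$ additional forward events really close the joint cycle, so that the global final states (and not merely $q$'s local state) coincide. This requires either a Ramsey-style argument on joint periods, or a construction of the MSC family tailored so that each component's period divides a common horizon chosen in advance.
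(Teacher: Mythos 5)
Your setup is sound up to the last step, but the argument does not close, and the obstacle you flag at the end is not a technicality---it is exactly where the proof breaks. After conditions (a) and (b), the run of the deterministic CFM on $(M_{n_1},\xi')$ ends with process $q$ in the state that the accepting run on $(M_{n_2},\xi^\ast_{n_2})$ reaches \emph{after its $n_1$-th forward}, i.e.\ in an \emph{intermediate} state of an accepting run. Nothing forces an intermediate state to be part of an accepting final tuple: the accepting runs you have exhibited all terminate with a $b$-labelled transition at the very last forward, whereas your corrupted run terminates with an $a$-labelled transition, so its final $q$-state lies in the ``periodic $a$-tail'' and need not coincide with any final state occurring in $\Acc$. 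Condition (c) does not repair this: aligning the \emph{final} global states of the runs on $M_{n_1}$ and $M_{n_2}$ (both reached via a $b$-transition) says nothing about the tuple $(s_p,\,\delta(u_{n_1-1},a,m),\,s_r)$ in which the corrupted run ends. Since the adversary CFM is free to choose $\Acc$, it can simply declare every state in the $a$-tail non-accepting and thereby reject all your corrupted inputs; your family therefore does not refute determinism. The deeper reason is that in your family the distinguishing information (where the gossip value switches from $a$ to $b$) sits at a single position, the last event, so defeating the automaton requires controlling the acceptance condition---which a pigeonhole on intermediate states cannot do.

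The paper's proof avoids this by keeping the total length $n$ \emph{fixed} and varying instead an interleaving parameter $k$: process $q$ receives $k$ direct messages, then all $n$ forwards, then the remaining $n-k$ direct messages, so the switch point of the correct annotation moves with $k$ through the middle of $q$'s timeline. The pigeonhole is applied to the pairs $(s_k,t_k)$ of $q$-states at the two boundaries of the middle block, over $n>|S_q|^2$ values of $k$. Splicing the middle block of the run on $M^{k'}$ into the run on $M^k$ then yields a run on a genuine MSC with a wrongly placed switch point, and---crucially---this run terminates in the \emph{final} state of the accepting run on $M^k$, because the suffix after the second matched state $t_k=t_{k'}$ is taken verbatim from $M^k$. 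Acceptance is thus inherited for free, which is precisely the step your construction cannot supply. If you want to keep your family, you would have to redesign it so that the incorrect annotation can be completed to a run whose final global state is literally that of some accepting run; moving the perturbation away from the last event and into a pumpable middle section, as the paper does, is the natural way to achieve that.
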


\begin{proof}
  Let $P = \{p,q,r\}$ and $\Sigma = \{\bbullet,\abullet,\nbullet\}$.
  The symbol $\nbullet$ will only be used for clarity, and could be replaced
  arbitrarily with $\bbullet$ or $\abullet$.
  We show that there exists no deterministic \CFM recognizing the set $L$ of
  MSCs $M = (E,\prel,\mrel,\ploc,\lambda)$ such that for all $e \in E_q$,
  $\lambda(e) = \lambda(\lastp p e)$.
  As a consequence, there is no deterministic gossip \CFM over $P$ and $\Sigma$.
    
  Assume that there exists a deterministic \CFM
  $\A = (\A_p,\A_q,\A_r,\Msg,\Acc)$ such that $L(\A) = L$.
  Fix $n > |S_q|^2$, where $S_q$ is the set of states of $\A_q$.
  For all $k \in \{0,\ldots,n-1\}$, we define
  an MSC $M^k = (E,\prel,\mrel^k,\ploc,\lambda^k)$,
  as depicted in Figure~\ref{fig:det-counter-example}
  (where $n = 5$ and $k = 2$):
  \begin{itemize}
  \item $E_p = \{e_i \mid 0 \le i < 2n\}$,
    $E_q = \{f_i \mid 0 \le i < 2n\}$,
    and $E_r = \{g_i \mid 0 \le i < 2n\}$,
    with $e_0 \prel e_1 \prel \cdots \prel e_{2n-1}$,
    $f_0 \prel f_1 \prel \cdots \prel f_{2n-1}$,
    and $g_0 \prel g_1 \prel \cdots \prel g_{2n-1}$.
  \item For all $0 \le i < k$, $e_{2i} \mrel^k f_{i}$, and for all
    $k \le i < n$, $e_{2i} \mrel^k f_{n + i}$.

    For all $0 \le i < n$, $e_{2i+1} \mrel^k g_{2i}$, and $g_{2i+1} \mrel^k f_{k+i}$.
  \item For all $0 \le i < n$,
    $\lambda^k(e_{2i}) = \bbullet$ and $\lambda^k(e_{2i+1}) = \abullet$.
    
    For all $f \in E_q$, $\lambda^k(f) = \lambda^k(\lastp p e)$.
    That is, for all $0 \le i < 2k-1$, $\lambda^k(f_i) = \bbullet$,
    and for all $2k - 1 \le i < n$, $\lambda^k(f_{i}) = \abullet$.
    
    For all $g \in E_r$, $\lambda^k(g) = \nbullet$.
  \end{itemize}
  Clearly, $M^k \in L(\A)$.
  Let $s_k$ and $t_k$ be the states associated respectively with $f_{k-1}$
  (or the initial state of $\A_q$ if $k = 0$)
  and $f_{k+n-1}$ in the unique run $\rho^k$ of $\A$ on $M^k$.
  That is, if $k > 0$, $s_k = \target(\rho^k(f_{k-1}))$ and
  $t_k = \target(\rho^k(f_{k+n-1}))$.
  
  \begin{figure}[h]
  \centering
    \begin{tikzpicture}[semithick,>=stealth,xscale=1]
      \draw (-0.5,0) -- (12,0);
      \draw (-0.5,1.5) -- (12,1.5);
      \draw (-0.5,3) -- (12,3);
      
      \foreach \i
      [evaluate=\i as \x using int(2*\i), evaluate=\i as \y using int(2*\i+1)]
      in {0,...,4} {
        \node[bcirc,label=above:$e_{\x}$] (e\x) at (1.5*\i,3) {};
        \node[acirc,label=above:$e_{\y}$] (e\y) at (1.5*\i+0.6,3) {};
        \node[ncirc,label=below:$g_{\x}$] (g\x) at (1.5*\i+0.6,1.5) {};
        \node[ncirc,label=below:$g_{\y}$] (g\y) at (1.5*\i+1.2,1.5) {};
        \draw[->] (e\y) -- (g\x);
      }
      \foreach \i in {0,1} {
        \node[bcirc,label=below:$f_\i$] (f\i) at (1.5*\i,0) {};
      }
      \foreach \i in {2} {
        \node[bcirc,label=below:$f_\i$] (f\i) at (1.3*\i,0) {};
      }
      \foreach \i in {3,...,9} {
        \node[acirc,label=below:$f_\i$] (f\i) at (1.3*\i,0) {};
      }
      \foreach \i [evaluate=\i as \x using int(2*\i)] in {0,1} {
        \draw[->] (e\x) -- (f\i);
      }
      \foreach \i [evaluate=\i as \x using int(2*\i),
      evaluate=\i as \y using int(5+\i)] in {2,3,4} {
        \draw[->] (e\x) -- (f\y);
      }
      \foreach \i [evaluate=\i as \x using int(2*\i+1),
      evaluate=\i as \y using int(2+\i)] in {0,...,4} {
        \draw[->] (g\x) -- (f\y);
      }

      \node at (-1,0) {$q$};
      \node at (-1,1.5) {$r$};
      \node at (-1,3) {$p$};

      \draw[decorate,decoration={brace,amplitude=5pt,raise=0.7cm}]
      (f1) -- (f0) node[midway,yshift=-1cm] {$k$};
      \draw[decorate,decoration={brace,amplitude=5pt,raise=0.7cm}]
      (f6) -- (f2) node[midway,yshift=-1cm] {$n$};
      \draw[decorate,decoration={brace,amplitude=5pt,raise=0.7cm}]
      (f9) -- (f7) node[midway,yshift=-1cm] {$n-k$};

      \node[above left = 0cm of f1] {$s_k$};
      \node[above left = 0cm of f6] {$t_k$};
    \end{tikzpicture}
    \caption{Definition of $M^k$\label{fig:det-counter-example}}
  \end{figure}

  Note that for all $k$, the sequence of send and receive actions performed by
  process $p$ or process $r$ in $M^k$ are the same, so the runs of
  $\A$ on MSCs $M^k$ only differ on process $q$.
  In particular, the sequence of $n$ messages sent by process $r$ to process $q$
  is the same for all $k$.
  Moreover,
  since $n > |S_q|^2$, there exist $0 \le k < k' < n$ such that $s_k = s_{k'}$
  and $t_k = t_{k'}$.
  We can then combine the runs of $\A$ on $M^k$ and $M^{k'}$
  to define a run where process $q$ receives the messages from process $p$ and
  $r$ in the same order as in $M^k$, but behaves as in $M^{k'}$ in the middle
  part where it receives the $n$ messages from process~$r$. More precisely,
  let $M = (E,\prel,\mrel^k,\ploc,\lambda)$, where $(E,\prel,\mrel^k,\ploc)$ is
  as in $M^k$, and $\lambda$ is defined as follows: for all $0 \le i < k+k'-1$,
  $\lambda(f_{i}) = \bbullet$, and for all $k+k'-1 \le i < n$,
  $\lambda(f_{i}) = \abullet$.
  Then $M \in L(\A)$, but $M \notin L$.
\end{proof}


\section{Linear-Time Temporal Logic}
\label{sec:logic}

The transformation of temporal-logic formulas into automata has many applications, ranging from synthesis to verification. Temporal logics are well understood in the realm of sequential systems where formulas can reason about linearly ordered sequences of events. As we have seen, executions of concurrent systems are actually partially ordered. Over partial orders, however, there is no longer a canonical temporal logic like LTL over words. There have been several attempts to define natural counterparts over Mazurkiewicz traces (see \cite{GK-fi07} for an overview). All of them are less expressive than asynchronous automata \cite{Zielonka87}, a standard model of shared-memory systems. We will show below that this is still true when formulas are interpreted over MSCs and the system model is given in terms of \CFMs.

\smallskip

Many temporal logics over partial orders are captured by the following generic language, which we call \TL.
The set of \TL formulas is defined as follows:
\[
  \varphi ::= a \mid p \mid \varphi \lor \varphi \mid \lnot \varphi \mid
  \Co \varphi \mid \varphi \Until \varphi \mid \varphi \Since \varphi
  \qquad \text{where } a \in \Sigma\text{, } p \in P \, .
\]
A formula $\varphi \in \TL$ is interpreted over events of MSCs.
We say that $M,e \models a$ if $\lambda(e) = a$;
similarly, $M,e \models p$ if $\ploc(e) = p$.
The $\Co$ modality jumps to a parallel event:
$M,e \models \Co \varphi$ if there exists $f \in E$ such that $e \not\le f$,
$f \not\le e$, and $M,f \models \varphi$.
We use strict versions of until and since:
\[
  \begin{array}{lcl}
    M,e \models \varphi_1 \Until \varphi_2
    & \quad\text{if}\quad & \text{there exists $f \in E$ such that }
                            e < f \text{ and } M,f \models \varphi_2 \\
    & & \text{and, for all } e < g < f,\ M,g \models \varphi_1 \, \\
    M,e \models \varphi_1 \Since \varphi_2
    & \quad\text{if}\quad & \text{there exists $f \in E$ such that }
                            f < e \text{ and } M,f \models \varphi_2 \\
    & & \text{and, for all } f < g < e,\ M,g \models \varphi_1 \, .
  \end{array}
\]

This temporal logic and others have been studied in the context of Mazurkiewicz
traces \cite{GK-fi07,Thiagarajan94,DiekertG06}.  The logic introduced by Thiagarajan in
\cite{Thiagarajan94} uses an until modality $\Up p$ corresponding to the usual LTL
(non-strict) until for
a single process $p$, together with a unary modality $\Op p$ interpreted as
follows: $\Op p \varphi$ holds at $e$ if the first event on process $p$ that is
not in the past of $e$ satisfies $\varphi$.  Other interesting modalities are
$\X{p}$ and $\Y{p}$ with the following meaning: $\X{p}$ moves to the first event on process $p$ in the
strict future of the current event, while $\Y{p}$ moves to the last event on
process $p$ that is in the strict past of the current event.  All these
modalities can be expressed in \TL:
\begin{align*}
  \X{p}\varphi & \df \neg p \Until(p\wedge\varphi) & \varphi_1 \Up p \varphi_2 & \df (p \land \varphi_2) \lor\Bigl((\lnot p \lor \varphi_1) \land \Bigl((\lnot p \lor \varphi_1) \Until (p \land \varphi_2)\Bigl)\Bigr)
  \\
  \Y{p}\varphi & \df \neg p \Since(p\wedge\varphi)
  &   \Op p \varphi & \df \Y p \X p \varphi
  \lor \Co \bigl(p \land \lnot \Y p \mathit{true} \land \varphi \bigr)
  \lor \X p \bigl(\lnot \Y p \mathit{true} \land \varphi \bigr)
\end{align*}

It turns out that we can exploit our gossip protocol to translate every \TL formula into an equivalent \CFM:

\begin{theorem}
  For all $\varphi \in \TL$, there exists a \CFM $\A_\varphi$ over $\Procs$ and $\Sigma \times \{0,1\}$
  with $2^{|\varphi|^{\mathcal{O}(|P| \log |P|)}}$ states recognizing
  the set of MSCs $(M,\gamma)$ such that, for all events $e$, $\gamma(e) = 1$
  iff $M,e \models \varphi$.
\end{theorem}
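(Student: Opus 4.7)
The plan is structural induction on $\varphi$. Atomic cases ($a \in \Sigma$ and $p \in P$) need only a trivial constant-state CFM that checks the label or the process identity locally. Boolean combinations $\varphi_1 \vee \varphi_2$ and $\neg \varphi$ are handled by standard product constructions on the inductively obtained CFMs, each of which produces a $\{0,1\}$-labelling $\gamma$ of the events.

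The crux is the three modal cases. Uniformly, the strategy is: inductively obtain CFMs $\A_{\varphi_i}$ marking events with $\{0,1\}$-bits for the relevant subformulas; take a product to label events with an enriched alphabet $\Sigma \times \{0,1\}^k$; feed this into $\Agossip$ of Theorem~\ref{gossipcfm} so that every event $e \in E_p$ learns, for each process $q$, the enriched label of $\lastp q e$ (in particular, the truth values of the subformulas at $\lastp q e$); and finally combine with the path-expression machinery of Section~\ref{sec:comp} to verify the modal condition.

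For $\Co \varphi$ at $e \in E_p$, gossip yields a convenient characterisation of concurrency: $f \in E_q$ is concurrent with $e$ iff $f > \lastp q e$ and $\lastp p f < e$. Hence $\Co \varphi(e) = 1$ iff, for some $q \neq p$, the \emph{first} $\varphi$-event on $q$ strictly after $\lastp q e$ exists and is not in the future of $e$; this ``first $\varphi$-event after'' target is pinpointed by a suitable path expression, and its order relative to $e$ is decided by Lemma~\ref{lem:ord} applied to a finite family of such path expressions. For $\Until$ (resp.\ $\Since$), a comparable witness $f$ in the future (resp.\ past) of $e$ is guessed non-deterministically; for each process $r$, the events strictly between $e$ and $f$ on $r$ form a contiguous segment, so the universal clause ``all intermediate events satisfy $\varphi_1$'' reduces to a per-process segment check that enriched gossip can propagate along the chain.

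I expect the main obstacle to be the universal direction in $\Co$, i.e., verifying that no concurrent $\varphi$-event has been missed; the ``first $\varphi$-event after $\lastp q e$'' reduction, together with the preorder-comparison CFM of Lemma~\ref{lem:ord}, is the tool that turns this apparently non-local constraint into a finitely checkable one. The state-complexity bound follows by composing across the induction: each inductive step contributes a gossip-style exponential blow-up on an alphabet extended by polynomially many subformula bits, yielding the stated $2^{|\varphi|^{\mathcal O (|P| \log |P|)}}$.
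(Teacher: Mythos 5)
Your skeleton --- induction on $\varphi$, trivial atomic and Boolean cases, gossip plus path expressions for the modalities, mirror closure for $\Until$ --- is the paper's skeleton, but the two decisive reductions are missing, and what you put in their place would not go through as stated. For $\Since$: guessing a witness $f$ for each event $e$ and then ``propagating a per-process segment check along the chain'' is not something a \CFM can do directly. Every event needs its own witness, and the intermediate segment $\{g\in E_r \mid f<g<e\}$ has its two endpoints anchored at two \emph{different} events, so neither Theorem~\ref{gossipcfm} (which only delivers the label of $\lastp{r}{e}$) nor Lemma~\ref{lem:ord} (which compares $\last{\p}{e}$ with $\last{\p'}{e}$, both anchored at the same $e$) decides it. The paper's key move eliminates both problems at once: relabel events by which of $\varphi_1,\varphi_2$ they satisfy, canonicalize the witness as the \emph{last} $\varphi_2$-labelled event of a fixed process $p$ in the past of $e$ (w.l.o.g., since a later witness only shrinks the intermediate set), and replace ``some $h$ with $f<h<e$ violates $\varphi_1$'' by a statement about the last event $g\in E_p$ from which such an $h<e$ is reachable. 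Both sides then become maxima of $\last{\p}{e}$ over finite families of path expressions \emph{with label tests inserted} (e.g.\ $\SimplePaths_{p,r}\cdot c\cdot\SimplePaths_{r,q}$), and the whole modality collapses to a single order comparison handled by Lemma~\ref{lem:ord}. The label letters $\amove{a}\in\moves$ --- which your proposal never uses --- are precisely what makes this reformulation possible.

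The $\Co$ case has the same kind of gap. Your characterization of concurrency is correct (the events of $E_q$ concurrent with $e$ are those strictly between $\lastp{q}{e}$ and the first $q$-event in the future of $e$), but ``the first $\varphi$-event on $q$ after $\lastp{q}{e}$'' is a \emph{forward} quantity anchored at $\lastp{q}{e}$, not a backward quantity anchored at $e$, so Lemma~\ref{lem:ord} again does not decide the comparison you need; you would have to build a mirror analogue and transfer the anchor back to $e$, none of which is set up. The paper sidesteps this entirely by invoking an existing \CFM for ``$e$ is parallel to some $1$-labelled event'' (Lemma~14 of the STACS'18 paper cited in the proof). Your route for $\Co$ may well be salvageable, but as written the tool you name does not decide the property you need, and you yourself flag the universal direction as the main obstacle without resolving it.
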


\begin{proof}
  We construct $\A_\varphi$ by induction on $\varphi$.
  The cases $\varphi = a$, $\varphi = p$, $\varphi = \lnot \psi$, and
  $\varphi_1\vee\varphi_2$ are straightforward.
  For $\varphi = \Co \psi$, we compose $\A_\psi$ with a \CFM that
  tests, for each event~$e$, whether it is parallel to some $1$-labeled event.
  The existence of such a CFM (with $2^{2^{\mathcal{O}(|P| \log |P|)}}$ states)
  has been shown in \cite[Lemma 14]{BFG-stacs18}.

  Suppose that we have \CFMs $\A_{\varphi_1}$ and $\A_{\varphi_2}$ for
  $\varphi_1$ and $\varphi_2$.  The input MSCs of $\A_{\varphi_1 \Since
  \varphi_2}$ will be ``pre-labeled'' using $\A_{\varphi_1}$ and
  $\A_{\varphi_2}$, and by projection we can assume that we work with MSCs over
  an alphabet $\{a,b,c,d\}$ where $a$ stands for $\varphi_1 \land \varphi_2$,
  $b$ stands for $\varphi_1 \land \lnot \varphi_2$, $c$ stands for $\lnot
  \varphi_1 \land \varphi_2$, and $d$ stands for $\lnot \varphi_1 \land \lnot
  \varphi_2$.  So the construction of $\A_{\varphi_1 \Since \varphi_2}$ comes
  down to the construction of a \CFM over $\{a,b,c,d\}$ for the formula $(a \lor
  b) \Since (a \lor c) \equiv \bigvee_{p,q \in P} \varphi_{p,q}$ where
  $\varphi_{p,q}= q \land \Big((a \lor b) \Since (p \land (a \lor c))\Big)$.
  Moreover, since ${<}=\bigcup_{\p\in\SimplePaths}\sem{\p}{M}$, it is not
  difficult to check that, for all $e \in E_p$, we have: $M,e \models
  \varphi_{p,q}$ iff
  \begin{align*}
    & \max\left\{\last \p e \mid \p \in a\cdot\SimplePaths_{p,q} \cup
      c\cdot\SimplePaths_{p,q} \right\} \\
    > {}
    & \max\left\{\last \p e \mid \p \in \textstyle \bigcup_{r \in P}
      \SimplePaths_{p,r} \cdot c \cdot \SimplePaths_{r,q}
      \cup \SimplePaths_{p,r} \cdot d \cdot \SimplePaths_{r,q} \right\} \, .
  \end{align*}
  Indeed, this can be read as ``the last event $f \in E_q$ satisfying
  $a \lor c$ in the past of $e$ happens after the last event $g \in E_q$
  such that there exists $h$ with $g<h<e$ which is not labeled $a$ or~$b$''.
  Moreover, by Lemma~\ref{lem:ord}, this property can be tested by
  a \CFM.
  
  As CFMs are closed under mirror languages, we can also construct a
  CFM for $\varphi_1\Until\varphi_2$.
\end{proof}

Note that this result is orthogonal to all other known translations of logic
formulas into \emph{unbounded} \CFMs \cite{BolligJournal,BFG-stacs18,BKM-lmcs10}.


\section{Conclusion}\label{sec:conclusion}

We studied the gossip problem in a message-passing environment with unbounded FIFO channels.
Our non-deterministic protocol is of own interest but also sheds light on the expressive power of communicating finite-state machines.
It allows us to embed well-known temporal logics into CFMs, i.e., properties that typically use three first-order variables.
We believe that we can go further and exploit gossiping to capture even more expressive logics and other high-level specifications
based on the notion of message sequence graphs. We leave this to future work.



\bibliography{lit}

\end{document}